\title{Isometric embeddings in trees and their use in the diameter problem}
\author[1,2]{Guillaume Ducoffe}
\affil[1]{\small National Institute for Research and Development in Informatics, Romania}
\affil[2]{\small University of Bucharest, Romania}
\date{}
\newtheorem{lemma}{Lemma}
\newtheorem{theorem}{Theorem}
\newtheorem{problem}{\bf Problem}
\newsavebox{\mybox}
\begin{document}

\maketitle

\begin{abstract}
A well-studied topic in Metric Graph Theory, as well as for more general metric spaces, is the existence of isometric embeddings in some product, distribution or system of trees. We initiate the study of a query-answering problem on a finite system of trees whose definition has been tailored for the fast resolution of distance and facility location problems on such ``tree-like'' spaces. Doing so, we prove that given a discrete space with $n$ points which is either embedded in a system of $k$ trees, or the Cartesian product of $k$ trees, we can compute all eccentricities in ${\cal O}(2^{{\cal O}(k\log{k})}(N+n)^{1+o(1)})$ time, where $N$ is the cumulative total order over all these $k$ trees. This is near optimal under the Strong Exponential-Time Hypothesis, even in the very special case of an $n$-vertex graph embedded in a system of $\omega(\log{n})$ spanning trees. However, given such an embedding in the strong product of $k$ trees, there is a much faster ${\cal O}(N + kn)$-time algorithm for this problem. All our positive results can be turned into approximation algorithms for the graphs and finite spaces with a quasi isometric embedding in trees, if such embedding is given as input, where the approximation factor (resp., the approximation constant) depends on the distortion of the embedding (resp., of its stretch). 

The existence of embeddings in the Cartesian product of finitely many trees has been thoroughly investigated for {\em cube-free median graphs}. We give the first-known quasi linear-time algorithm for computing the diameter within this graph class. It does not require an embedding in a product of trees to be given as part of the input. On our way, being given an $n$-node tree $T$, we propose a data structure with ${\cal O}(n\log{n})$ pre-processing time in order to compute in ${\cal O}(k\log^2{n})$ time the eccentricity of any subset of $k$ nodes. We combine the latter technical contribution, of independent interest, with a recent distance-labeling scheme that was designed for cube-free median graphs. 
\end{abstract}

\section{Introduction}\label{sec:intro}

Bandelt and Chepoi identified {\em median graphs}, {\em Helly graphs} and {\em $\ell_1$-graphs} amongst the most important classes from a metric and geometric points of view~\cite{BaC08}. We refer to their paper~\cite{BaC08} for a thorough treatment of these aspects in Graph Theory, and to~\cite{BoM08,Die12} for standard graph terminology. We here address a common property to all three aforementioned graph classes, that is, the existence of isometric embeddings (resp., scale embeddings) from these graphs in some product of trees. Recall that, if $(X,d_X)$ and $(Y,d_Y)$ are metric spaces, then an embedding is simply an injective function $\varphi : X \to Y$. Its distortion is the least $\alpha \geq 1$ s.t., for all $x,x' \in X$, we have $\alpha^{-1} d_X(x,x') \leq d_Y(\varphi(x),\varphi(x')) \leq \alpha d_X(x,x')$. The stretch is the least $\beta \geq 0$ s.t., for all $x,x' \in X$, we have $|d_X(x,x') - d_Y(\varphi(x),\varphi(x'))| \leq \beta$. An isometric embedding is one s.t. $\alpha = 1$, or equivalently $\beta=0$. We are especially interested in {\em graph metrics}. Specifically, being given a graph $G=(V,E,w)$ with positive edge-weights, recall that the weight of a path is the sum of the weights of its edges. The distance $d_G(u,v)$ between two vertices $u,v \in V$ is the smallest weight of a $uv$-path. -- For unweighted graphs, it is equal to the least number of edges of a $uv$-path. -- In particular, $(V,d_G)$ is a metric space, that we call a {\em graph metric}. If furthermore, $G$ is a tree, then we also call it a {\em tree metric}.  Tree metrics are arguably the simplest graph metrics. It is therefore unsurprising that a lot of works in Theoretical Computer Science has been devoted to the embedding of graph metrics in tree metrics with low multiplicative distortion, or even low additive stretch. Throughout this work, we explore some algorithmic applications of these ``tree embeddings'' to distance and facility location problems. We focus on {\em unweighted graphs} but we often mention, at apropriate places throughout the paper, when we can extend our results to weighted graphs, or even more general metric spaces.

\medskip
{\bf Related work and Definitions.}
Before we detail our contributions in the paper, let us summarize the literature about embeddings in {\em one} tree.
The graphs that can be {\em isometrically} embedded in a {\em weighted} tree are exactly the block graphs~\cite{BaM86,How79}. See also~\cite{Ban90,BPS90} for an efficient recognition of tree metrics. More generally, all the metric spaces that embed in a tree with constant distortion have a bounded Gromov hyperbolicity: a polynomial-time computable parameter from geometric group theory that is inspired by the four-point characterization of tree metrics~\cite{Gro87}. Conversely, the Gromov hyperbolicity of a finite metric space is a logarithmic approximation of the best possible distortion of an embedding of this space in a tree~\cite{CDEHV08,Gro87}. For unweighted graphs, there also exist constant-factor approximation algorithms for computing this minimum distortion~\cite{CDNR+12}. However, solving the problem exactly is NP-hard on general metrics~\cite{ABFP+98}. An interesting special case is the embedding of a graph in a spanning tree, that has been studied on its own for various graph classes~\cite{BDLU+07,BDL0L4,CaC95,CaK94,DrK14,FeK01,FGV11,LeL99,MVR96,MRRA03}. This variant of the problem is NP-hard even on chordal graphs~\cite{BDL0L4}, and several inapproximability results have been derived~\cite{EmP09}. To our best knowledge, the best known approximation factor for this problem is logarithmic~\cite{DrK14}.

In this paper, we consider graphs and metric spaces that can be (quasi-)isometrically embedded in a tree {\em or more than one tree}. There are several ways to define such an embedding. For instance, an embedding of a space $(X,d)$ in a system of trees $T_1,T_2,\ldots,T_k$ is defined as $k$ projections $\varphi_i : X \to V(T_i), \ 1 \leq i \leq k$. Then, the distortion of this embedding (resp. its stretch) is defined as the least $\alpha$ s.t., $\forall x,y \in X, \ \alpha^{-1} d(x,y) \leq \min_i d_{T_i}(\varphi_i(x),\varphi_i(y)) \leq \alpha d(x,y)$ (resp., as the least $\beta$ s.t. $\forall x,y \in X, \ |d(x,y) - \min_i d_{T_i}(\varphi_i(x),\varphi_i(y))| \leq \beta$). Every $n$-vertex graph can be isometrically embedded in a system of $n$-shortest-path trees. Even better, there exist quasi-isometric embeddings of many important classes, such as planar and bounded-treewidth graphs, to a sublinear number of spanning trees. Embeddings of graphs in systems of spanning trees were considered in~\cite{DrA14,DCKX12,DXY09,DrY10,DYC06,DYL06,DYX08} for various classes. We further note a distant relation between embedding in a system of trees and embedding in {\em random trees}, as defined in~\cite{Bar96}. Specifically, by a classic application of Markov's inequality, if a metric space with $n$ points can be $\alpha$-probabilistically approximated by tree metrics, then it embeds in a system of ${\cal O}(\log{n}/\varepsilon)$ trees with distortion $\alpha \cdot (1+\varepsilon)$, for any $\varepsilon > 0$. A more broadly studied concept, that we also consider in this paper, is the embedding of metric spaces in some product of trees. We refer to~\cite{HIK11} for the standard definition of several graph products. For instance, the {\em median graphs}, by far the most-studied class in Metric Graph Theory, are exactly the retracts of hypercubes, and so, can be isometrically embedded in a Cartesian product of paths~\cite{Ban84,Isb80,Van83}. Recall that median graphs are exactly the $1$-skeletons of CAT(0) cube complexes~\cite{Gro87}. Furthermore, every {\em cube-free} median graph with maximum degree $\Delta$ can be isometrically embedded in the Cartesian product of $\Delta^{{\cal O}(1)}$ trees~\cite{ChH13}. The graphs that are isometrically embeddable in the Cartesian product of two trees, {\it a.k.a.}, partial double trees, are a subclass of median graphs and they can be recognized in polynomial time~\cite{BCE15}. However, in general, minimizing the number of tree factors necessary for an embedding in a Cartesian product of trees is NP-hard~\cite{BaV89}. In fact, it is already NP-complete to decide whether a median graph embeds in the Cartesian product of three trees. Finally, the {\em Helly graphs}, that are the discrete analogue of hyperconvex metric spaces, are exactly the retracts of strong products of paths~\cite{NoR83}.

\medskip
{\bf Problems considered.} For any point $x$ in some metric space $(X,d)$, let $e(x) := \max_{y \in X} d(x,y)$ be its eccentricity. The diameter and the radius of a metric space are the maximum and minimum eccentricities of a point in that space. We denote the latter by $diam(X,d)$ and $rad(X,d)$ respectively (or, if $G$ is a graph, by $diam(G)$ and $rad(G)$, respectively). Computing all the eccentricities in a finite metric space is a classic facility location problem. Being given the distance-matrix for the $n$ points, it can be done in ${\cal O}(n^2)$ time. In particular, we can compute all the eccentricities in an $n$-vertex $m$-edge graph in total ${\cal O}(nm)$ time. Conversely, assuming the so-called Strong-Exponential-Time Hypothesis (SETH), if an algorithm computes all the eccentricities in a graph, or even just the diameter, in ${\cal O}(n^am^b)$ time, then we must have $a+b \geq 2$~\cite{RoV13}. The problem of finding {\em truly subquadratic} algorithms for the diameter problem on some special graph classes, in the size $n+m$ of the input, has been addressed in many papers, due to its practical and theoretical importance~\cite{BCT17,BCD98,BHM20,CDV02,CDP19,Dam16,FaP80,Ola90}. This question has been solved in the affirmative for Helly graphs~\cite{DuD20+}, but it remains wide open for median graphs~\cite{BCCV20,BCCVR19}. Starting with the pioneering work of Cabello and Knauer~\cite{CaK09}, several recent results have been obtained through the use of tools and parameters from Computational Geometry~\cite{AVW16,BHM20,Cab18,Duc19,DHV19,DHV20,GKHM+18}. In this respect, the celebrated result of Cabello for planar graphs~\cite{Cab18} (subsequently improved in~\cite{GKHM+18}) is especially interesting, since it has also led to exact distance oracles for this graph class with better trade-offs~\cite{CDW17,GMWW18}. In~\cite{Duc20+}, we asked whether we could derive some general conditions under which a distance oracle can be used in order to compute in truly subquadratic time all the eccentricities in a graph, and we gave a partial answer for hub labelings. Since the class of trees admits distance-labeling schemes with logarithmic labels~\cite{GPPR0404}, we may also regard embeddings in trees as another important family of distance oracles.  

\medskip
{\bf Our contributions.} We prove that being given an embedding of a finite metric space in a ``small'' number of trees -- sublogarithmic --, we can break the quadratic barrier for diameter computation. Specifically, we can compute all eccentricities in quasi linear time. Our approach in this paper is inspired by our prior work~\cite{Duc20+} where we obtained similar results for hub labelings. It turns out that {\em all} algorithmic problems that we here study can be reduced to the design of an efficient data structure for some abstract problem over a system of trees. In what follows, $\odot$ denotes a binary associative operation over the nonnegative real numbers ({\it e.g.}, the addition, minimum or maximum of two numbers).

\begin{center}
	\fbox{
		\begin{minipage}{.95\linewidth}
			\begin{problem}[\textsc{$\odot$-Eccentricities}]\
				\label{prob:ecc} 
					\begin{description}
					\item[Global Input:] A system $(T_i)_{1 \leq i \leq k}$ of trees, and a subset $S \subseteq \prod_{i = 1}^k V(T_i)$.
					\item[Query Input:] $v = (v_1,v_2,\ldots,v_k) \in \prod_{i = 1}^k V(T_i)$
					\item[Query Output:] $e_\odot(v,S) := \max\{ d_{T_1}(s_1,v_1) \odot d_{T_2}(s_2,v_2) \odot \ldots  \odot d_{T_k}(s_k,v_k) \mid \\ (s_1,s_2,\ldots,s_k) \in S \}.$
				\end{description}
			\end{problem}     
		\end{minipage}
	}
\end{center}

For instance, let $\varphi$ be an isometric embedding of $(X,d)$ to the Cartesian product of the trees $T_1, T_2, \ldots, T_k$. We set $S := \varphi(X)$, and then, for every $x \in X$ we obtain: $e_{+}(\varphi(x)) = \max_{y \in X} \sum_{i=1}^k d_{T_i}(\varphi_i(x),\varphi_i(y)) = \max_{y \in X}d_{\square_{i=1}^k T_i}(\varphi(x),\varphi(y))$, where $\square$ stands for the Cartesian product and, for every $1 \leq i \leq k$, $\varphi_i$ denotes the projection of $\varphi$ to $V(T_i)$. In particular, $e_{+}(x)$ is equal to the eccentricity of $x$. Similarly, we can reduce the eccentricity problem in systems and strong products of trees to $\min$- and \textsc{$\max$-Eccentricities}, respectively. See Sec.~\ref{sec:red}. We also stress that being given a {\em quasi} isometric embedding, our above approach leads to approximation algorithms for computing all the eccentricities.

\smallskip
In Sec.~\ref{sec:algo}, we report on data structures for \textsc{$\odot$-Eccentricities} where $\odot \in \{\min,\max,+\}$. The pre-processing and query times are {\em quasilinear} and {\em sublinear}, respectively, for every fixed $k$. Our results in this section imply quasilinear-time algorithms for computing all the eccentricities in a finite space $(X,d)$, being given an isometric embedding in a system, strong product or Cartesian product of $k$ trees. Although it is NP-hard to compute such embeddings even for partial cubes~\cite{BaV89}, we stress that there are linear-time algorithms in order to embed {\em hexagonal systems}~\cite{Che96}, {\em squaregraphs}~\cite{BCE10} and {\em partial double trees}~\cite{BCE15} in the Cartesian product of constantly many trees (at most five). -- {\it En passant}, we remark that Chepoi was the first to exploit such embeddings for computing the diameter, but in combination with a ``total monotonic'' property for the distance-matrix of the graphs considered~\cite{Che96}. -- Another interesting add-up to our framework in this paper are the subclasses of median graphs with bounded {\em lattice dimension}, {\it a.k.a.} those median graphs embeddable in the Cartesian product of $k$ paths, for some constant $k$. Indeed, an embedding in the product of a least number of paths can be computed in ${\cal O}(n\tau + \tau^{2.5})$ time, with $\tau$ the {\em isometric dimension} -- smallest dimension of a hypercube in which the graph isometrically embeds --~\cite{Che12,Epp05}. Thus, for all the median graphs with bounded lattice dimension, we can compute all the eccentricities in truly subquadratic time, provided their isometric dimension is in ${\cal O}(n^{4/5 - \epsilon})$ for some $\epsilon > 0$.  

We complete our positive results in Sec.~\ref{sec:algo} with simple {\em conditional lower bounds} showing their near optimality. See Sec.~\ref{sec:hardness}. Finally, in Sec.~\ref{sec:kEccTree}, we address the problem of computing the eccentricity of $k$-subsets of nodes in a tree, with an application to the diameter problem for {\em cube-free median graphs}. Given a tree $T$, the eccentricity of a node-subset $U$ is defined as $e_T(U) = \max_{v \in V(T)}\min_{u \in U} d_T(v,u)$. We note that an efficient data structure for computing the eccentricities of $k$-subsets of nodes may be of potential interest for $k$-facility location problems~\cite{KhS00,Tin84}. The latter can be reduced to the design of a data structure for \textsc{$\min$-Eccentricities} (roughly, we just need to consider a system of $k$ copies of $T$). However, for this special case, we obtain better trade-offs than in Sec.~\ref{sec:algo}, namely: an ${\cal O}(|V(T)|\log{|V(T)|})$ pre-processing time (independent of $k$), and an ${\cal O}(k\log^2{|V(T)|})$ query time. See Theorem~\ref{thm:main-tree}. These running times are optimal up to polylogarithmic factors.
Then, recall that the {\em dimension} of a median graph $G$ is the largest $d \geq 1$ such that $G$ contains a $d$-cube (hypercube of dimension $d$) as an isometric subgraph. In particular, the median graphs of dimension $1$ are exactly the trees. The median graphs of dimension at most $2$, {\it a.k.a.}, {\em cube-free median graphs}, have already received some attention in the literature~\cite{BCE15,BKS07,ChH13,CLR20,ChM13}.
Chepoi et al. recently proposed a distance-labeling scheme for this class~\cite{CLR20}. A central object in their construction are the so-called pseudo-gated trees. By applying Theorem~\ref{thm:main-tree} to the latter, we show how to derive a quasi linear ${\cal O}(n\log^3{n})$-time algorithm for the diameter problem on cube-free median graphs (Theorem~\ref{thm:cube-free-median}). 

\section{Reductions}\label{sec:red}

We provide reductions from the various problems studied in this paper to the design of an efficient data structure for \textsc{$\odot$-Eccentricities} (Problem~\ref{prob:ecc}). Note that most proofs in this section are very similar to each other.

\begin{lemma}\label{lem:reduction-collection}
Let $(X,d)$ be a metric space with $n$ points, and let $\varphi$ be an embedding of that space in a system of $k$ trees with distortion $\alpha$ (resp., with stretch $\beta$), for some $k \geq 1$. Denote these $k$ trees by $T_1,T_2,\ldots,T_k$ and set $N = \sum_{i=1}^k |V(T_i)|$. If we can solve \textsc{$\min$-Eccentricities} with $C_p(N,|S|,k)$ pre-processing time and $C_q(N,|S|,k)$ query time, then in total ${\cal O}(C_p(N,n,k)+n\cdot(C_q(N,n,k)+k))$ time we can compute an $\alpha^2$-approximation of all eccentricities of $(X,d)$ (resp., an $+2\beta$-approximation).

In particular, if $\varphi$ is an {\em isometric embedding}, then we can compute the diameter and all eccentricities of $(X,d)$ in ${\cal O}(C_p(N,n,k)+n\cdot(C_q(N,n,k)+k))$ time.
\end{lemma}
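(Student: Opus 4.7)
The plan is to reduce the eccentricity computations in $(X,d)$ to $n$ independent queries against a \textsc{$\min$-Eccentricities} data structure built on the image $S := \varphi(X) = \{(\varphi_1(x),\ldots,\varphi_k(x)) : x \in X\} \subseteq \prod_{i=1}^k V(T_i)$. Observe that the composite map $\varphi$ must be injective: if $\varphi_i(x) = \varphi_i(x')$ for every $i$, then $\min_i d_{T_i}(\varphi_i(x),\varphi_i(x')) = 0$, which forces $d(x,x') = 0$ in the distortion case, and $d(x,x') \leq \beta$ in the stretch case (so we may safely identify $x$ and $x'$ without changing any eccentricity by more than $\beta$). Thus $|S| \leq n$. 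First I would list $S$ in $O(nk)$ time, then invoke the assumed preprocessing in $C_p(N,n,k)$ time. For each $x \in X$, I would run one query on the vector $v_x := (\varphi_1(x),\ldots,\varphi_k(x))$ to obtain $\tilde{e}(x) := \max_{y \in X} \min_i d_{T_i}(\varphi_i(x),\varphi_i(y))$, at a total cost of $n \cdot C_q(N,n,k)$ plus the $O(nk)$ needed to assemble the vectors $v_x$. Adding everything up matches the bound stated in the lemma.

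For correctness, the distortion inequality directly gives, for every pair $x, y \in X$, the bounds $\alpha^{-1} d(x,y) \leq \min_i d_{T_i}(\varphi_i(x),\varphi_i(y)) \leq \alpha\, d(x,y)$. Maximizing over $y \in X$ propagates them to the eccentricities: $\alpha^{-1} e(x) \leq \tilde{e}(x) \leq \alpha\, e(x)$. Reporting $\alpha \cdot \tilde{e}(x)$ therefore satisfies $e(x) \leq \alpha\, \tilde{e}(x) \leq \alpha^2 e(x)$, which is the promised $\alpha^2$-approximation. The stretch variant is identical in structure: $|d(x,y) - \min_i d_{T_i}(\varphi_i(x),\varphi_i(y))| \leq \beta$ propagates (after a max over $y$) to $|e(x) - \tilde{e}(x)| \leq \beta$, so that $\tilde{e}(x) + \beta$ lies in $[e(x), e(x) + 2\beta]$, yielding the advertised additive $+2\beta$-approximation.

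In the isometric special case, $\alpha = 1$ (equivalently $\beta = 0$), so the query returns $\tilde{e}(x) = e(x)$ exactly; every eccentricity is recovered, and one extra $O(n)$ pass returns the diameter as $\max_x \tilde{e}(x)$. I do not anticipate any real obstacle in writing this up: the only subtle point is that the preprocessing and query times are parameterized by $|S| \leq n$ rather than by the potentially much larger product $\prod_i |V(T_i)|$, which is precisely why Problem~\ref{prob:ecc} is stated with $S$ as an explicit set and not implicit from the trees.
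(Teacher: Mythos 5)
Your proof is correct and follows essentially the same route as the paper's: build $S = \varphi(X)$ in $O(kn)$ time, preprocess, query each $\varphi(x)$ to obtain $e_{\min}(\varphi(x),S) = \max_{y}\min_i d_{T_i}(\varphi_i(x),\varphi_i(y))$, and then scale by $\alpha$ (resp. shift by $\beta$) to get the advertised approximation. The extra discussion of injectivity is unnecessary---$|S| \leq n$ holds trivially as the image of an $n$-point set, and the approximation guarantee depends only on the distortion/stretch inequality applied pointwise, not on $\varphi$ being injective---but it causes no harm.
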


\begin{proof}
Recall that, for every $x \in X$, we have $\varphi(x) = (\varphi_1(x),\varphi_2(x),\ldots,\varphi_k(x))$, where $\forall 1 \leq i \leq k, \ \varphi_i(x) \in V(T_i)$. Set $S = \{ \varphi(x) \mid x \in X \}$. Note that being given $\varphi$, the set $S$ can be constructed in total ${\cal O}(kn)$ time. Then, w.r.t. the collection $(T_i)_{1 \leq i \leq k}$ and $S$: 
$$\forall x \in X, \ e_{\min}(\varphi(x),S) = \max_{y \in X}\min_{1 \leq i \leq k} d_{T_i}(\varphi_i(x),\varphi_i(y)).$$
Note that, after a pre-processing in $C_p(N,n,k)$ time, we can compute any value $e_{\min}(\varphi(x),S)$ in $C_q(N,n,k)$ time.
In particular, $\alpha^{-1}e(x) \leq e_{\min}(\varphi(x),S) \leq \alpha e(x)$ (resp., $|e(x) - e_{\min}(\varphi(x),S)| \leq \beta$). As a result, $\alpha e_{\min}(\varphi(x),S)$ is an $\alpha^2$-approximation of $e(x)$ (resp., $e_{\min}(\varphi(x),S) + \beta$ is an $+2\beta$ approximation of $e(x)$).
\end{proof}

\begin{lemma}\label{lem:reduction-cartesian}
Let $(X,d)$ be a metric space with $n$ points, and let $\varphi$ be an embedding of that space in a Cartesian product of $k$ trees with distortion $\alpha$ (resp., with stretch $\beta$), for some $k \geq 1$. Denote these $k$ tree factors by $T_1,T_2,\ldots,T_k$ and set $N = \sum_{i=1}^k |V(T_i)|$. If we can solve \textsc{$+$-Eccentricities} with $C_p(N,|S|,k)$ pre-processing time and $C_q(N,|S|,k)$ query time, then in total ${\cal O}(C_p(N,n,k)+n\cdot(C_q(N,n,k)+k))$ time we can compute an $\alpha^2$-approximation of all eccentricities of $(X,d)$ (resp., an $+2\beta$-approximation).

In particular, if $\varphi$ is an {\em isometric embedding}, then we can compute the diameter and all eccentricities of $(X,d)$ in ${\cal O}(C_p(N,n,k)+n\cdot(C_q(N,n,k)+k))$ time.
\end{lemma}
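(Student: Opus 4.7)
The plan is to mimic the argument of Lemma~\ref{lem:reduction-collection} almost verbatim, but replacing the $\min$ aggregation across tree factors by the additive aggregation that is natural for the Cartesian product. Specifically, from the given embedding $\varphi$ I would construct the set $S = \{\varphi(x) \mid x \in X\} \subseteq \prod_{i=1}^k V(T_i)$ in ${\cal O}(kn)$ time by reading off the $k$ coordinates of each point. The key observation is that the graph distance in the Cartesian product $\square_{i=1}^k T_i$ is known to be the sum of the factor distances, so for every $x \in X$ one has
\[
e_{+}(\varphi(x),S) \;=\; \max_{y \in X}\sum_{i=1}^k d_{T_i}(\varphi_i(x),\varphi_i(y)) \;=\; \max_{y \in X} d_{\square_{i=1}^k T_i}(\varphi(x),\varphi(y)).
\]

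Next I would run the pre-processing for \textsc{$+$-Eccentricities} on $(T_i)_{1\le i\le k}$ and $S$ in $C_p(N,n,k)$ time, then answer one query per $x \in X$ in $C_q(N,n,k)$ time each, giving the claimed total running time ${\cal O}(C_p(N,n,k)+n\cdot(C_q(N,n,k)+k))$.

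For the approximation guarantee I would plug the definition of distortion (resp.\ stretch) into the displayed identity above. Since $d_{\square_{i=1}^k T_i}(\varphi(x),\varphi(y)) = \sum_i d_{T_i}(\varphi_i(x),\varphi_i(y))$, the hypothesis $\alpha^{-1} d(x,y) \le \min_i d_{T_i}(\varphi_i(x),\varphi_i(y))$ is replaced here by the stronger sandwich $\alpha^{-1} d(x,y) \le \sum_i d_{T_i}(\varphi_i(x),\varphi_i(y)) \le \alpha d(x,y)$, which is exactly what an embedding in the Cartesian product means. Taking the $\max$ over $y$ on all three sides yields $\alpha^{-1} e(x) \le e_{+}(\varphi(x),S) \le \alpha e(x)$ (resp., $|e(x) - e_{+}(\varphi(x),S)| \le \beta$), so multiplying the queried value by $\alpha$ (resp., adding $\beta$) produces an $\alpha^2$-approximation (resp., a $+2\beta$-approximation) of $e(x)$. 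In the isometric case $\alpha = 1$ (or $\beta = 0$), the queried value is already exact, which settles the last sentence of the statement.

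There is essentially no obstacle to worry about: the argument is structurally identical to that of Lemma~\ref{lem:reduction-collection}, and the only substantive point to verify is the elementary identity that the distance in a Cartesian product of trees equals the sum of the factor-tree distances on the coordinates -- which is a standard fact about Cartesian products of graphs (see \cite{HIK11}). Everything else is bookkeeping of pre-processing and query costs.
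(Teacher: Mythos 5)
Your proposal is correct and follows essentially the same route as the paper's proof: construct $S = \varphi(X)$, invoke the identity $d_{\square_i T_i} = \sum_i d_{T_i}$, run pre-processing once and one query per point, and transfer the distortion/stretch bound termwise through the $\max_y$. The only cosmetic difference is that the paper references Lemma~\ref{lem:reduction-collection} rather than re-deriving the cost accounting, but the content is identical.
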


\begin{proof}
As for Lemma~\ref{lem:reduction-collection}, we recall that, for every $x \in X$, we have $\varphi(x) = (\varphi_1(x),\varphi_2(x),\ldots,\varphi_k(x))$, where $\forall 1 \leq i \leq k, \ \varphi_i(x) \in V(T_i)$. Set $S = \{ \varphi(x) \mid x \in X \}$ (constructible in ${\cal O}(kn)$ time). Then, w.r.t. the collection $(T_i)_{1 \leq i \leq k}$ and the set $S$: 
$$\forall x \in X, \ e_{+}(\varphi(x),S) = \max_{y \in X}\sum_{i=1}^k d_{T_i}(\varphi_i(x),\varphi_i(y)) = \max_{y \in X}d_{\square_{i=1}^k T_i}(\varphi(x),\varphi(y)).$$
Note that, after a pre-processing in $C_p(N,n,k)$ time, we can compute any value $e_{+}(\varphi(x),S)$ in $C_q(N,n,k)$ time.
In particular, $\alpha^{-1}e(x) \leq e_{+}(\varphi(x),S) \leq \alpha e(x)$ (resp., $|e(x) - e_{+}(\varphi(x),S)| \leq \beta$). As a result, $\alpha e_{+}(\varphi(x),S)$ is an $\alpha^2$-approximation of $e(x)$ (resp., $e_{+}(\varphi(x),S) + \beta$ is an $+2\beta$ approximation of $e(x)$).
\end{proof}

\begin{lemma}\label{lem:reduction-strong}
Let $(X,d)$ be a metric space with $n$ points, and let $\varphi$ be an embedding of that space in a strong product of $k$ trees with distortion $\alpha$ (resp., with stretch $\beta$), for some $k \geq 1$. Denote these $k$ tree factors by $T_1,T_2,\ldots,T_k$ and set $N = \sum_{i=1}^k |V(T_i)|$. If we can solve \textsc{$\max$-Eccentricities} with $C_p(N,|S|,k)$ pre-processing time and $C_q(N,|S|,k)$ query time, then in total ${\cal O}(C_p(N,n,k)+n\cdot(C_q(N,n,k)+k))$ time we can compute an $\alpha^2$-approximation of all eccentricities of $(X,d)$ (resp., an $+2\beta$-approximation).

In particular, if $\varphi$ is an {\em isometric embedding}, then we can compute the diameter and all eccentricities of $(X,d)$ in ${\cal O}(C_p(N,n,k)+n\cdot(C_q(N,n,k)+k))$ time.
\end{lemma}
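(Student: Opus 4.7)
The plan is to mirror the proofs of Lemmas~\ref{lem:reduction-collection} and~\ref{lem:reduction-cartesian} essentially verbatim, with the only conceptual change being the identity that links the strong-product metric to the $\max$ aggregation operator. Specifically, for any $k$ trees $T_1,\ldots,T_k$ and any two tuples $(x_1,\ldots,x_k),(y_1,\ldots,y_k) \in \prod_{i=1}^k V(T_i)$, the standard formula for the strong product gives
\[
  d_{\boxtimes_{i=1}^k T_i}\bigl((x_1,\ldots,x_k),(y_1,\ldots,y_k)\bigr) \;=\; \max_{1 \leq i \leq k} d_{T_i}(x_i,y_i).
\]
This is the precise reason the \textsc{$\max$-Eccentricities} problem is the correct data-structural abstraction here, just as $+$ was for the Cartesian product and $\min$ was for systems of trees.

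Concretely, I would first recall that $\varphi(x) = (\varphi_1(x),\ldots,\varphi_k(x))$ with $\varphi_i(x) \in V(T_i)$, and let $S := \{\varphi(x) \mid x \in X\}$, which is built in ${\cal O}(kn)$ time. Applying the assumed data structure for \textsc{$\max$-Eccentricities} to the system $(T_i)_{1 \leq i \leq k}$ and to $S$, I would observe that for every $x \in X$,
\[
  e_{\max}(\varphi(x),S) \;=\; \max_{y \in X}\,\max_{1 \leq i \leq k} d_{T_i}(\varphi_i(x),\varphi_i(y)) \;=\; \max_{y \in X} d_{\boxtimes_{i=1}^k T_i}(\varphi(x),\varphi(y)).
\]
Hence after the $C_p(N,n,k)$ pre-processing, one query per point $x$ suffices to recover the quantity $\max_{y \in X} d_{\boxtimes_{i=1}^k T_i}(\varphi(x),\varphi(y))$, for a total running time of ${\cal O}(C_p(N,n,k) + n\cdot(C_q(N,n,k) + k))$.

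For the approximation statement, the distortion hypothesis $\alpha^{-1} d(x,y) \leq d_{\boxtimes_{i=1}^k T_i}(\varphi(x),\varphi(y)) \leq \alpha d(x,y)$ (respectively, the additive stretch hypothesis) carries over directly to the eccentricities by taking maxima over $y$, so $\alpha^{-1} e(x) \leq e_{\max}(\varphi(x),S) \leq \alpha e(x)$ (respectively, $|e(x) - e_{\max}(\varphi(x),S)| \leq \beta$). Multiplying the value $e_{\max}(\varphi(x),S)$ by $\alpha$ then yields an $\alpha^2$-approximation of $e(x)$, and adding $\beta$ yields an $+2\beta$-approximation; in the isometric case $\alpha=1$, $\beta=0$, so the exact eccentricities are returned. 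There is no genuine obstacle here: the only ingredient not already present in the two preceding proofs is the strong-product distance formula above, which is standard (see~\cite{HIK11}).
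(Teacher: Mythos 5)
Your proof is correct and follows essentially the same route as the paper: set $S = \{\varphi(x) \mid x \in X\}$, use the identity $d_{\boxtimes_{i=1}^k T_i}(\varphi(x),\varphi(y)) = \max_i d_{T_i}(\varphi_i(x),\varphi_i(y))$ to rewrite $e_{\max}(\varphi(x),S)$ as the eccentricity of $\varphi(x)$ in the strong product, and then transfer the distortion (resp.\ stretch) bound from individual distances to eccentricities by taking a maximum over $y$. The paper's own proof is just a terser version that defers to Lemmas~\ref{lem:reduction-collection} and~\ref{lem:reduction-cartesian} for the details you spelled out.
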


\begin{proof}
We set $S = \{ \varphi(x) \mid x \in X \}$, as for Lemmas~\ref{lem:reduction-collection} and~\ref{lem:reduction-cartesian}. Then, w.r.t. the collection $(T_i)_{1 \leq i \leq k}$ and the set $S$: 
$$\forall x \in X, \ e_{\max}(\varphi(x),S) = \max_{y \in X}\max_{1 \leq i \leq k} d_{T_i}(\varphi_i(x),\varphi_i(y)) = \max_{y \in X}d_{\odot_{i=1}^k T_i}(\varphi(x),\varphi(y)),$$
where $\odot$ stands for the strong product operator.
We conclude as for the two previous lemmas.
\end{proof}

Finally, we emphasize the following other special case of \textsc{$\min$-Eccentricities}:

\begin{lemma}\label{lem:reduction-k-ecc}
Let $T$ be a tree with $n$ nodes and let $k \geq 1$ be an integer. 
If we can solve \textsc{$\min$-Eccentricities} with $C_p(N,|S|,k)$ pre-processing time and $C_q(N,|S|,k)$ query time, where $N$ is the sum of the orders of the $k$ trees in the input system, then 
after a pre-processing of $T$ in ${\cal O}(kn + C_p(kn,n,k))$ time, for any $k$-subset of nodes $U$ we can compute its eccentricity in ${\cal O}(k+C_q(kn,n,k))$ time.
\end{lemma}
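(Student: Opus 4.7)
The plan is to reduce the $k$-subset eccentricity problem on $T$ to a single instance of \textsc{$\min$-Eccentricities} on a system consisting of $k$ identical copies of $T$, by using the ``diagonal'' embedding of $V(T)$ into the $k$-fold product as the set $S$.

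More precisely, I would take $T_1,T_2,\ldots,T_k$ to be $k$ disjoint copies of $T$, each with a canonical identification of its node set with $V(T)$. Then I would build $S = \{(v,v,\ldots,v) \mid v \in V(T)\} \subseteq \prod_{i=1}^k V(T_i)$, the image of $V(T)$ under the diagonal map, so that $|S| = n$ and $N = kn$. Producing the $k$ copies of $T$ together with $S$ takes ${\cal O}(kn)$ time, after which the promised pre-processing of the \textsc{$\min$-Eccentricities} data structure costs $C_p(kn,n,k)$ time, accounting for the claimed pre-processing bound.

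For the query, given a $k$-subset $U = \{u_1,u_2,\ldots,u_k\}$ of $V(T)$, I would form the query vector $u = (u_1,u_2,\ldots,u_k)$ in ${\cal O}(k)$ time (placing $u_i$ in the $i$-th coordinate, i.e., in the copy $T_i$) and call the data structure. The key identity, which I would verify in one line, is
\[
e_{\min}(u,S) \;=\; \max_{v \in V(T)} \min_{1 \leq i \leq k} d_{T_i}(v,u_i) \;=\; \max_{v \in V(T)} \min_{1 \leq i \leq k} d_T(v,u_i) \;=\; e_T(U),
\]
because $s = (v,v,\ldots,v)$ ranges over $S$ exactly when $v$ ranges over $V(T)$, and all the trees $T_i$ are isometric to $T$. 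Thus a single query returns $e_T(U)$ in $C_q(kn,n,k)$ time, and the total query cost is ${\cal O}(k + C_q(kn,n,k))$ as claimed.

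There is no real obstacle here; the reduction is essentially a bookkeeping exercise, and the only subtlety is making sure that the $k$-subset $U$ is consistently encoded as a single tuple regardless of how its elements are listed (any ordering works, since the outer operator is a $\min$ over coordinates, which is permutation-invariant, and repeated elements — if $U$ were allowed to be a multiset — do not change the value of $\min_i d_T(v,u_i)$).
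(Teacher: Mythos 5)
Your reduction is exactly the one in the paper: build $k$ disjoint isomorphic copies of $T$, take $S$ to be the image of the diagonal map $v \mapsto (v,\ldots,v)$, encode $U$ as the tuple $(u_1,\ldots,u_k)$ with its elements in arbitrary order, and observe that $e_{\min}(u,S)=\max_v\min_i d_T(u_i,v)=e_T(U)$. The bookkeeping of the pre-processing and query costs also matches, so the proposal is correct and coincides with the paper's proof.
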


\begin{proof}
For every $1 \leq i \leq k$, let $T_i$ be a copy of $T$, and let $\varphi_i : V(T) \to V(T_i)$ be an isomorphism. We set $S = \{ (\varphi_1(v),\varphi_2(v),\ldots,\varphi_k(v)) \mid v \in V(T)\}$. In particular, the elements of the set $S$ are, for each node of $T$, the collection of its $k$ copies. Furthermore, for every subset $U$ of $k$ nodes, let us define $\varphi(U) = (\varphi_1(u_1),\varphi_2(u_2),\ldots,\varphi_k(u_k))$, where $u_1,u_2,\ldots,u_k$ are the nodes of this subset arbitrarily ordered. By construction, we have:
$$e_{\min}(\varphi(U),S) = \max_{v \in V(T)}\min_{1 \leq i \leq k} d_{T_i}(\varphi_i(u_i),\varphi_i(v)) = \max_{v \in V(T)}\min_{1 \leq i \leq k}d_T(u_i,v) = e_T(U).$$
\end{proof}

\section{Algorithms}\label{sec:algo}

Our main result in this section is as follows:

\begin{theorem}\label{thm:main}
For every $(T_i)_{1 \leq i \leq k}$ and $S$, let $N := \sum_{i=1}^k |V(T_i)|$. We can solve \textsc{$\min$-Eccentricities} (resp., \textsc{$+$-Eccentricities}) with ${\cal O}(2^{{\cal O}(k\log{k})}(N+|S|)^{1+o(1)})$ pre-processing time and ${\cal O}(2^{{\cal O}(k\log{k})}(N+|S|)^{o(1)})$ query time.
\end{theorem}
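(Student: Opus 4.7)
\emph{Proof idea.}
The plan is to reduce both variants to multidimensional range queries via centroid decomposition of each $T_i$, and then invoke a general-purpose $k$-dimensional range-search data structure. First, compute the centroid decomposition $C_i$ of each tree $T_i$; this takes $O(|V(T_i)| \log |V(T_i)|)$ time per tree, gives a centroid tree of depth $O(\log N)$, and yields the identity $d_{T_i}(u,v) = d_{T_i}(u, c^{\star}) + d_{T_i}(c^{\star}, v)$ with $c^{\star} = \mathrm{lca}_{C_i}(u,v)$, which unfolds any tree distance through a common centroid ancestor.

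Second, for each $s \in S$, enumerate the $O(\log^k N)$ tuples $\vec{c} = (c_1, \ldots, c_k)$ of centroid ancestors, one per tree, and insert $s$ into a bucket indexed by $\vec{c}$ together with its weight $w_{\vec{c}}(s) := \sum_i d_{T_i}(s_i, c_i)$ and its \emph{child-signature} (the child of $c_i$ in $C_i$ whose subtree contains $s_i$, or a marker for $s_i = c_i$, in each coordinate). At query time, for each $\vec{c} \in \prod_i A_{C_i}(v_i)$ the exact value $\sum_i d_{T_i}(s_i, v_i) = w_{\vec{c}}(s) + \sum_i d_{T_i}(c_i, v_i)$ is achieved precisely by those $s$ whose child-signature differs from $v$'s in every coordinate, i.e., those with $\mathrm{lca}_{C_i}(s_i, v_i) = c_i$ for all $i$.

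Third, inside each bucket we are left with a $k$-dimensional axis-aligned query: the max of $w_{\vec{c}}(s)$ (for \textsc{$+$-Eccentricities}) or the existence of $s$ with $d_{T_i}(s_i, c_i) \geq r - d_{T_i}(c_i, v_i)$ in every coordinate (for \textsc{$\min$-Eccentricities}, after a binary search on the answer $r$). Both tasks can be handled by known high-dimensional range-search data structures, e.g.\ iterated range trees with fractional cascading, giving $n \cdot 2^{O(k)} \cdot n^{o(1)}$ preprocessing and $2^{O(k)} \cdot n^{o(1)}$ query time where $n$ is the bucket size. Summing over buckets and using the elementary bound $\log^k N = 2^{k \log \log N} \leq 2^{O(k \log k)} \cdot N^{o(1)}$ (which holds uniformly in $k$, since either $\log \log N = O(\log k)$ or $k \log \log N = o(\log N)$) produces the claimed pre-processing and query times.

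The main obstacle is keeping the $k$-dependence at $2^{O(k \log k)}$ rather than a polynomial in the centroid-tree degrees. Handling the LCA constraint by brute-force enumeration of the forbidden children of each $c_i$ would cost $\prod_i \deg_{C_i}(c_i)$ per bucket, which is untenable for high-degree centroids; encoding ``$\mathrm{lca}_{C_i}(s_i,v_i)=c_i$ in every coordinate'' as a single axis-aligned constraint on the $k$-dimensional child-signature space and delegating the work to a standard orthogonal-range data structure is what keeps the overhead at $2^{O(k \log k)}$.
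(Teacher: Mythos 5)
Your proposal follows the same core strategy as the paper: centroid-decompose each $T_i$, enumerate the ${\cal O}(\log^k N)$ ancestor tuples $\vec{c}$, encode the condition $\mathrm{lca}_{C_i}(s_i,v_i)=c_i$ via child-signatures, and delegate to orthogonal range search after expanding the $k$ ``$\neq$'' constraints into $2^{{\cal O}(k)}$ axis-aligned range queries. For \textsc{$+$-Eccentricities} the two proofs essentially coincide; the only organizational difference is that you bucketize the points by $\vec{c}$ and keep a separate range structure per bucket, whereas the paper embeds each $c_i$ as an explicit coordinate and uses a single $2k$-dimensional range tree, which spares the dictionary over populated $\vec{c}$-tuples but is otherwise interchangeable.

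For \textsc{$\min$-Eccentricities} you take a genuinely different route. The paper avoids binary search: it replaces each point by $k$ augmented copies, one per candidate minimizing coordinate $i$, carrying an explicit index coordinate and the $k-1$ difference values $d_{T_j}(s_j,c_j)-d_{T_i}(s_i,c_i)$ for $j\neq i$; the range constraints then isolate precisely those $s$ for which coordinate $i$ attains the minimum, and the range tree directly returns the maximizer of $d_{T_i}(c_i,s_i)$. You instead binary search on the answer $r$ and turn each trial into a dominance test $d_{T_i}(c_i,s_i)\geq r-d_{T_i}(c_i,v_i)$ over the $k$ stored distance coordinates, layered on top of the $k$ child-signature constraints --- so the per-bucket query is really $2k$-dimensional, not $k$-dimensional as you write, though this does not change the asymptotics. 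Both routes achieve the claimed bound: your binary search costs an extra ${\cal O}(\log N)$ factor per query (distances are integers in $[0,N]$ for unweighted trees), absorbed into $(N+|S|)^{o(1)}$; the paper's augmentation costs an extra factor $k$ in the number of stored points and ${\cal O}(k)$ extra coordinates, absorbed into $2^{{\cal O}(k\log k)}$. Your version is arguably cleaner conceptually, since you never need to reason about which coordinate realizes the minimum; the paper's version returns the exact eccentricity in a single pass per tuple and avoids the logarithmic overhead of binary search, a saving that vanishes inside the $o(1)$ exponent anyway.
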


We need to introduce two useful tools. First, let $V$ be a set of $k$-dimensional points, and let $f : V \to \mathbb{R}$. A {\em box} is the Cartesian product of $k$ intervals. A range query asks, given a box ${\cal R}$, for a point $\overrightarrow{p} \in V \cap {\cal R}$ s.t. $f(\overrightarrow{p})$ is maximized. Up to some pre-processing in ${\cal O}(|V|\log^{k-1}{|V|})$ time, such query can be answered in ${\cal O}(\log^{k-1}{|V|})$ time~\cite{Wil85}. The corresponding data structure is called a $k$-dimensional range tree. Furthermore, note that for any $\varepsilon > 0$, $\forall x > 0, \ \log^k{x} \leq 2^{{\cal O}(k\log{k})}x^{\varepsilon}$~\cite{BHM20}. 

Second, for an $n$-node tree $T=(V,E)$, a {\em centroid} is a node whose removal leaves subtrees of order at most $n/2$. A classic theorem from Jordan asserts that such node always exists~\cite{Jor69}. Furthermore, we can compute a centroid in ${\cal O}(n)$ time by dynamic programming ({\it e.g.}, see~\cite{Gol71}). A {\em centroid decomposition} of $T$ is a rooted tree $T'$, constructed as follows. If $|V(T)| \leq 1$, then $T' = T$. Otherwise,  let $c$ be a centroid. Let $T_1',T_2',\ldots,T_p'$ be centroid decompositions for the subtrees $T_1,T_2,\ldots,T_p$ of $T \setminus \{c\}$. We obtain $T'$ from $T_1',T_2',\ldots,T_p'$ by adding an edge between $c$ and the respective roots of these rooted subtrees, choosing $c$ as the new root. Note that we can compute a centroid decomposition in ${\cal O}(n)$ time~\cite{DPV19}. We rather use the folklore ${\cal O}(n\log{n})$-time algorithm since, for any node $v$, we also want to store its path $P(v)$, in $T'$, until the root, and the distances $d_T(v,c_i)$, in $T$, for any $c_i \in P(v)$.

\begin{proof}[Proof of Theorem~\ref{thm:main}]
We first give a proof for the case $\odot = +$. During a pre-processing phase, we compute a centroid decomposition for each tree $T_i, \ 1 \leq i \leq k$ separately. Furthermore, let $T_i'$ be the resulting centroid decomposition of $T_i$. For every node $v_i \in V(T_i)$, we compute the path $P(v_i)$ from $v_i$ until the root of $T_i'$. Note that the length of $P(v_i)$ is at most the depth of $T_i'$, and so, it is in ${\cal O}(\log{N})$. We also compute, for every $c_i \in P(v_i)$, the distance $d_{T_i}(v_i,c_i)$ in the original tree $T_i$. By using the folklore algorithm in order to compute the centroid decomposition ({\it e.g.}, see~\cite{GPPR0404}), all the above computations can be done in total ${\cal O}(N\log{N})$ time. Then, we iterate over the elements $s = (s_1,s_2,\ldots,s_k) \in S$. For every $1 \leq i \leq k$, let $P(s_i)$ be the path of $s_i$ to the root into the centroid decomposition $T_i'$ computed for $T_i$. We consider all possible $k$-sequences $c = (c_1,c_2,\ldots,c_k)$ s.t., $\forall 1 \leq i \leq k$, $c_i \in P(s_i)$. There are ${\cal O}(\log^k{N})$ possibilities. W.l.o.g., all nodes have a unique identifier. Throughout the remainder of the proof, we identify the nodes with their identifiers, thus treating them as numbers. For any sequence $c$, we create an $2k$-dimensional point $\overrightarrow{p}_c(s)$, as follows: for every $1 \leq i \leq k$, the $(2i-1)^{th}$ and $2i^{th}$ coordinates are equal to $c_i$ and the unique neighbour of $c_i$ onto the $c_is_i$-path in $T_i'$, respectively (if $c_i = s_i$, then we may set both coordinates equal to $s_i$). The construction of all these ${\cal O}(|S|\log^k{N})$ points takes time ${\cal O}(k|S|\log^k{N})$. We include these points $\overrightarrow{p}_c(s)$, with an associated value $f(\overrightarrow{p}_c(s))$ (to be specified later in the proof) in some $2k$-dimensional range tree. It takes ${\cal O}((|S|\log^k{N})\log^{2k-1}{(|S|\log^k{N})})$ time.

Then, in order to answer a query, let $v=(v_1,v_2,\ldots,v_k)$ be the input. As before, for every $1 \leq i \leq k$, let $P(v_i)$ be the path of $v_i$ to the root into $T_i'$. We iterate over all the $k$-sequences $c = (c_1,c_2,\ldots,c_k)$ s.t., $\forall 1 \leq i \leq k$, $c_i \in P(v_i)$. Let $S_c \subseteq S$ contain every $(s_1,s_2,\ldots,s_k)$ s.t. $\forall 1 \leq i \leq k$, $c_i$ is the least common ancestor of $v_i$ and $s_i$ in $T_i'$. The subsets $S_c$ partition $S$, and so, $e_+(v,S) = \max_{c} e_+(v,S_c)$. We are left explaining how to compute $e_+(v,S_c)$ for any fixed $c$. For that, we observe that $c_i$ is the least common ancestor of $v_i$ and $s_i$ in $T_i'$ if and only if $c_i \in P(s_i) \cap P(v_i)$, and either $c_i \in \{v_i,s_i\}$ or the two neighbours of $c_i$ onto the $c_iv_i$-path and $c_is_i$-path in $T_i'$ are different. In the latter case, let us denote by $u_i$ the neighbour of $c_i$ onto the $c_iv_i$-path in $T_i'$. We can check these above conditions, for every $1 \leq i \leq k$, with the following constraints, over the $2k$-dimensional points $\overrightarrow{p} = (p_1,p_2,\ldots,p_{2k})$ constructed during the pre-processing phase: $\forall 1 \leq i \leq k, \ p_{2i-1} = c_i$ and if $c_i \neq v_i, \ p_{2i} \neq u_i$. Since $p_{2i} \neq u_i$ is equivalent to $p_{2i} \in (-\infty,u_i) \cup (u_i,+\infty)$, each inequality can be replaced by two range constraints over the same coordinate. In particular, since there are $\leq k$ such inequalities, we can transform these above constraints into ${\cal O}(2^k)$ range queries. Therefore, we can output a point $\overrightarrow{p}_c(s)$, for some $s \in S_c$, maximizing $f(\overrightarrow{p}_c(s))$, in ${\cal O}(2^k\log^{2k-1}{(|S|\log^k{N})})$ time. Let $f(\overrightarrow{p}_c(s)) = \sum_{i=1}^k d_{T_i}(s_i,c_i)$. Since we have $s \in S_c$, $d_{T_i}(s_i,v_i) = d_{T_i}(s_i,c_i) + d_{T_i}(c_i,v_i)$~\cite{GPPR0404}. As a result: $e_+(v,S_c) = f(\overrightarrow{p}_c(s)) + \sum_{i=1}^k d_{T_i}(c_i,v_i)$. There are ${\cal O}(\log^k{N})$ possible $c$, and so, the final query time is in ${\cal O}(2^k\log^{2k-1}{(|S|\log^k{N})}\log^k{N})$.

\medskip
For the case when $\odot = \min$, we need points with ${\cal O}(k)$ more coordinates in order to correctly identify some index $i$ s.t. $d_{T_i}(v_i,s_i) = \min_{1 \leq j \leq k} d_{T_j}(v_j,s_j)$. This is a similar trick as the one used in~\cite{AVW16,BHM20,CaK09}. Specifically, we replace each point $\overrightarrow{p}_c(s)$ by $k$ different $3k$-dimensional points $\overrightarrow{q}_{c,i}(s), \ 1 \leq i \leq k$. The point $\overrightarrow{q}_{c,i}(s)$ is obtained from $\overrightarrow{p}_c(s)$ by adding as its last $k$ coordinates the index $i$ and the values $d_{T_j}(s_j,c_j) - d_{T_i}(s_i,c_i), \ j \neq i$. During a pre-processing phase, we add these ${\cal O}(k|S|\log^k{N})$ points to some $3k$-dimensional range tree, with some associated value $f(\overrightarrow{q}_{c,i}(s))$ to be defined later in the proof. It can be done in total ${\cal O}((k|S|\log^k{N})\log^{3k-1}{(k|S|\log^k{N})})$ time.
In order to answer a query for $v=(v_1,v_2,\ldots,v_k)$, we iterate over the subsets $S_c$ as they were earlier defined in the proof. In particular, we are left explaining how to compute $e_{\min}(v,S_c)$ for any fixed $c$. For that, we reuse our prior range queries over the points $\overrightarrow{p}_c(s)$, but we need to complete the latter in order to determine some index $i$ s.t. $d_{T_i}(v_i,s_i) = \min_{1 \leq j \leq k} d_{T_j}(v_j,s_j)$. Specifically, for every $1 \leq i \leq k$, let $S_{c,i} \subseteq S_c$ contain every $(s_1,s_2,\ldots,s_k)$ s.t. $d_{T_i}(v_i,s_i) = \min_{1 \leq j \leq k} d_{T_j}(v_j,s_j)$. The subsets $S_{c,i}$ cover $S_c$, and so, $e_{\min}(v,S_c) = \max_{1 \leq i \leq k} e_{\min}(v,S_{c,i})$. Next, we explain how to compute $e_{\min}(v,S_{c,i})$ for any fixed $i$ (the total running time in order to compute $e_{\min}(v,S_c)$ is the same as for computing $e_{\min}(v,S_{c,i})$, up to an ${\cal O}(k)$ factor). Observe that, for $(s_1,s_2,\ldots,s_k) \in S_c$, we have: 
\begin{align*}
d_{T_i}(v_i,s_i) \leq d_{T_j}(v_j,c_j) &\Longleftrightarrow d_{T_i}(v_i,c_i) + d_{T_i}(c_i,s_i) \leq d_{T_j}(v_j,c_j) + d_{T_j}(c_j,s_j) \\
&\Longleftrightarrow d_{T_i}(s_i,c_i) - d_{T_j}(s_j,c_j) \leq d_{T_j}(v_j,c_j) - d_{T_i}(v_i,c_i).
\end{align*}
Therefore, we can check all the required conditions for $S_{c,i}$ by using the following constraints over the points $\overrightarrow{q} = (q_1,q_2,\ldots,q_{3k})$ -- constructed during the pre-processing phase:
$$\begin{cases}
\forall 1 \leq j \leq k \ q_{2j-1} = c_j, \\
\forall 1 \leq j \leq k \ \text{if} \ c_j \neq v_j, \ q_{2j} \neq u_j, \\
q_{2k+1} = i, \\
\forall 1 \leq j \leq i-1, \ q_{2k+1+j} \leq d_{T_j}(v_j,c_j) - d_{T_i}(v_i,c_i), \\
\forall i+1 \leq j \leq k, \ q_{2k+j} \leq d_{T_j}(v_j,c_j) - d_{T_i}(v_i,c_i).
\end{cases}$$
As in the first part of the proof, the above can be transformed into ${\cal O}(2^k)$ range queries.
Then, we can output a point $\overrightarrow{q}_{c,i}(s)$, for some $s \in S_{c,i}$, maximizing $f(\overrightarrow{q}_{c,i}(s))$, in ${\cal O}(2^k\log^{3k-1}{(k|S|\log^k{N})})$ time. In particular, let $f(\overrightarrow{q}_{c,i}(s)) = d_{T_i}(c_i,s_i)$. We have: 
$$e_{\min}(v,S_{c,i}) = f(\overrightarrow{q}_{c,i}(s)) + d_{T_i}(c_i,v_i).$$

\medskip
\noindent
{\bf Complexity analysis.}
Observe that the running time for $\odot = \min$ is higher than for $\odot = +$ (although, as it can be deduced from our analysis below, they are asymptotically comparable). Therefore, let us further analyse the running time of our algorithm for the case when $\odot = \min$. 

\smallskip
The pre-processing time is in ${\cal O}(N\log{N} + (k|S|\log^k{N})\cdot\log^{3k-1}{(k|S|\log^k{N})})$. Recall that, for any $\varepsilon > 0$, we have $\forall x > 0, \ \log^k{x} \leq 2^{{\cal O}(k\log{k})}x^{\varepsilon}$~\cite{BHM20}. In particular, for any $\varepsilon >0$: 
\begin{align*}
N\log{N} + (k|S|\log^k{N})\cdot\log^{3k-1}{(k|S|\log^k{N})} &\leq N^{1+\varepsilon} + (k|S|\log^k{N}) \cdot (2^{{\cal O}(k\log{k})}(k|S|\log^k{N})^\varepsilon) \\
&= N^{1+\varepsilon} + 2^{{\cal O}(k\log{k})} (k|S|\log^k{N})^{1+\varepsilon} \\
&\leq N^{1+\varepsilon} + 2^{{\cal O}(k\log{k})} \cdot (k2^{{\cal O}(k\log{k})} |S|N^{\varepsilon})^{1+\varepsilon} \\
&= N^{1+\varepsilon} +  2^{{\cal O}(k\log{k})} |S|^{1+\varepsilon} N^{{\cal O}(\varepsilon)} \\
&\leq N^{1+\varepsilon} +  2^{{\cal O}(k\log{k})} |S| \cdot (|S| + N)^{{\cal O}(\varepsilon)} \\
&\leq 2^{{\cal O}(k\log{k})} (N+|S|)^{1+{\cal O}(\varepsilon)}.
\end{align*}
As a result, the pre-processing time is in ${\cal O}(2^{{\cal O}(k\log{k})}(N+|S|)^{1+o(1)})$.

\smallskip
The query time is in ${\cal O}(k\cdot (\log^k{N}) \cdot 2^k\log^{3k-1}{(k|S|\log^k{N})})  = 2^{{\cal O}(k\log{k})}(N + |S|)^{o(1)}$.
\end{proof}

In contrast to Theorem~\ref{thm:main}, the distance between two vertices in the strong product of $k$ trees equals the maximum distance between their $k$ respective projections. Hence, we can process each tree of the system separately, and we obtain:

\begin{lemma}\label{lem:strong-prod}
For every $(T_i)_{1 \leq i \leq k}$ and $S$, let $N := \sum_{i=1}^k |V(T_i)|$. We can solve \textsc{$\max$-Eccentricities} with ${\cal O}(N + k|S|)$ pre-processing time and ${\cal O}(k)$ query time.
\end{lemma}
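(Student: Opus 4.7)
The plan is to exploit the factorization of the strong product distance: since $d_{\odot_{i=1}^k T_i}(v,s) = \max_i d_{T_i}(v_i,s_i)$, we have
\[
e_{\max}(v,S) \;=\; \max_{s \in S}\max_{1 \leq i \leq k} d_{T_i}(v_i,s_i) \;=\; \max_{1 \leq i \leq k}\max_{s_i \in S_i} d_{T_i}(v_i,s_i),
\]
where $S_i := \{\, s_i \mid (s_1,\ldots,s_k) \in S \,\}$ is the projection of $S$ onto $V(T_i)$. Thus the problem decouples into $k$ independent single-tree subproblems: for each $T_i$ and any query $v_i \in V(T_i)$, return the maximum distance from $v_i$ to $S_i$ in $T_i$.

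For the single-tree subproblem, I would rely on the folklore fact that for any (multi)set $U$ of vertices in a tree $T$, if $a,b \in U$ maximize $d_T(a,b)$, then for every $v \in V(T)$,
\[
\max_{u \in U} d_T(v,u) \;=\; \max\bigl(d_T(v,a),\, d_T(v,b)\bigr).
\]
This follows from the four-point condition for tree metrics: among the three sums $d(v,u)+d(a,b)$, $d(v,a)+d(u,b)$, $d(v,b)+d(u,a)$, the two largest are equal, and since $d(u,a), d(u,b) \leq d(a,b)$, the first sum cannot be strictly largest.

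The pre-processing then handles each tree independently. For each $T_i$ I would first read off $S_i$ from $S$ in $O(|S|)$ time per index (contributing $O(k|S|)$ total). Then I would apply the classical two-sweep algorithm restricted to $S_i$: pick any $s_0 \in S_i$, run a BFS from $s_0$, let $a_i$ be a vertex of $S_i$ farthest from $s_0$; run a BFS from $a_i$, let $b_i$ be a vertex of $S_i$ farthest from $a_i$; finally run a BFS from $b_i$ and store the two distance arrays $d_{T_i}(a_i,\cdot)$ and $d_{T_i}(b_i,\cdot)$. This costs $O(|V(T_i)|)$ per tree, summing to $O(N)$, for an overall pre-processing of $O(N+k|S|)$.

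To answer a query on $v = (v_1,\ldots,v_k)$, for each $1 \leq i \leq k$ I would read $\max(d_{T_i}(v_i,a_i),\, d_{T_i}(v_i,b_i))$ from the stored arrays in constant time, and return the overall maximum over $i$, for a total query time of $O(k)$. The only step that is not completely routine is the diametral-pair fact invoked above; everything else is bookkeeping around BFS on trees.
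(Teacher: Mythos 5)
Your proof is correct, and it takes a genuinely different route from the paper. Both approaches begin with the same decoupling step, $e_{\max}(v,S) = \max_i \max_{s_i \in S_i} d_{T_i}(v_i,s_i)$, but then diverge in how they solve the single-tree subproblem ``farthest point of $S_i$ from a query $v_i$.'' The paper prunes each $T_i$ to the minimal subtree $T_i'$ spanned by $S_i$ (by iteratively removing leaves outside $S_i$), precomputes all eccentricities in $T_i'$ by a linear-time tree algorithm, and for vertices outside $T_i'$ stores the distance to the entry point $\phi(v_i)$ into $T_i'$; the query answer is then $e_{T_i'}(v_i)$ or $d_{T_i}(v_i,\phi(v_i)) + e_{T_i'}(\phi(v_i))$. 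You instead invoke the diametral-pair dominance property for subsets of a tree metric: a farthest vertex of $S_i$ from any $v_i$ is always one of the two endpoints $a_i, b_i$ of a diametral pair of $S_i$, which the two-sweep BFS finds correctly even on a subset (the four-point argument also justifies the second sweep landing on a valid endpoint). Your data structure is just two BFS distance arrays per tree, which is arguably simpler and avoids depending on a subroutine for all tree eccentricities. Your explanation of the four-point step is phrased a bit loosely — the condition already forbids any single sum from being strictly largest, so the real content is that, from $d(v,u)+d(a,b) \leq \max\bigl(d(v,a)+d(u,b),\, d(v,b)+d(u,a)\bigr)$ and $d(u,a),d(u,b) \leq d(a,b)$, one concludes $d(v,u) \leq \max\bigl(d(v,a),d(v,b)\bigr)$ — but the argument does go through once that step is spelled out.
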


\begin{proof}
For every $1 \leq i \leq k$, let $\varphi_i : S \to V(T_i)$ be the projection of $S$ to $T_i$. We stress that the projections $\varphi_i(S), \ 1 \leq i \leq k$, can be computed in total ${\cal O}(k|S|)$ time.
Then, we iteratively remove from $T_i$ the leaves that are not in $\varphi_i(S)$. Let $T_1',T_2',\ldots,T_k'$ be the $k$ subtrees resulting from this above pre-processing. Note that, for every $1 \leq i \leq k$, $T_i \setminus T_i'$ is a forest whose each subtree can be rooted at some node adjacent to a leaf of $T_i'$, and so, adjacent to a node of $\varphi_i(S)$. For every node $v_i \in V(T_i) \setminus V(T_i')$, let $\phi(v_i)$ be the unique leaf of $T_i'$ s.t. the subtree of $T_i \setminus T_i'$ that contains $v_i$ also contains a neighbour of $\phi(v_i)$. We compute, and store, the distance $d_{T_i}(v_i,\phi(v_i))$. Since, for doing so, we only need to perform breadth-first searches on disjoint subtrees, the total running time of this step is in ${\cal O}(N)$. Finally, we compute, for $1 \leq i \leq k$, all eccentricities in $T_i'$. Again, this can be done in total ${\cal O}(N)$ time ({\it e.g.}, see~\cite{CDHV+19,Duc19b}). This concludes the pre-processing phase.

In order to answer a query, let us consider some input $v = (v_1,v_2,\ldots,v_k)$. Our key insight here is that we have:
$$e_{\max}(v,S) = \max\{ \max_{1\leq i \leq k} d_{T_i}(v_i,s_i) \mid (s_1,s_2,\ldots,s_k) \in S \} = \max_{1 \leq i \leq k}\max\{ d_{T_i}(v_i,s_i) \mid s_i \in \varphi_i(S) \}.$$
Therefore, in order to compute $e_{\max}(v,S)$ in ${\cal O}(k)$ time, it suffices to compute $\max\{ d_{T_i}(v_i,s_i) \mid s_i \in \varphi_i(S) \}$ in ${\cal O}(1)$ time for every $1 \leq i \leq k$. If $v_i \in V(T_i')$ then, since $\varphi_i(S) \subseteq V(T_i')$ and furthermore all the leaves of $T_i'$ are in $\varphi_i(S)$, we get $\max\{ d_{T_i}(v_i,s_i) \mid s_i \in \varphi_i(S) \} = e_{T_i'}(v_i)$. Otherwise, $\max\{ d_{T_i}(v_i,s_i) \mid s_i \in \varphi_i(S) \} = d_{T_i}(v_i,\phi(v_i)) + e_{T_i'}(\phi(v_i))$.
\end{proof}

\section{Hardness results}\label{sec:hardness}

We complete the positive results of Theorem~\ref{thm:main} with two conditional lower bounds. Recall that a split graph is a graph whose vertex-set can be bipartitioned into a clique $K$ and a stable set $I$. The Strong Exponential-Time Hypothesis (SETH) says that for any $\varepsilon > 0$, there exists a $k$ such that $k$-{\sc SAT} on $n$ variables cannot be solved in ${\cal O}((2-\varepsilon)^n)$ time~\cite{ImP01}. Both theorems  in this section follow from a ``SETH-hardness'' result in order to compute the diameter of {\em split graphs} with a logarithmic clique-number~\cite{BCH16}, and from the observation that every split graph with a maximal clique $K$ embeds in any system of $|K|$ shortest-path trees rooted at the vertices of $K$. Specifically:

\begin{lemma}[\cite{BCH16}]\label{lem:ov}
For any $\varepsilon > 0$, there exists a $c(\varepsilon)$ s.t., under SETH, we cannot compute the diameter in ${\cal O}(n^{2-\varepsilon})$ time on the split graphs of order $n$ and clique-number at most $c(\varepsilon) \log{n}$.
\end{lemma}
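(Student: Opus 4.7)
The plan is to reduce from the \textsc{Orthogonal Vectors} problem (OV). Recall the classical consequence of SETH that, for every $\varepsilon > 0$, there is a constant $c = c(\varepsilon)$ such that OV on two families $A,B$ of $n$ vectors in $\{0,1\}^d$ with $d = c\log n$ admits no ${\cal O}(n^{2-\varepsilon})$-time algorithm. I would reduce an OV instance $A = \{a_1,\dots,a_n\}$, $B = \{b_1,\dots,b_n\}$ to diameter computation on a split graph $G$ built as follows: the clique part is $K = \{c_0,c_1,\dots,c_d\}$, with one vertex per coordinate plus a dummy vertex $c_0$; the stable-set part is $I = A \cup B$; each $a_i \in I$ is joined to every $c_k$ with $a_i[k] = 1$, and likewise for every $b_j \in I$; finally every vertex of $I$ is also joined to $c_0$, which will repair degenerate cases and force the diameter to lie in $\{2,3\}$.

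By construction, all pairs inside $K$, inside $A$, inside $B$, or between $K$ and $I$ are at distance at most $2$, since $K$ is a clique and $c_0$ is a universal neighbour of $I$. For a cross pair $(a_i,b_j) \in A \times B$, one checks that $d_G(a_i,b_j) = 2$ iff $a_i$ and $b_j$ share a common neighbour in $K \setminus \{c_0\}$, which happens iff $\langle a_i,b_j\rangle \neq 0$; otherwise $d_G(a_i,b_j) = 3$, via any two-edge shortcut inside $K$. Hence $diam(G) = 3$ iff $(A,B)$ contains an orthogonal pair, and $diam(G) = 2$ otherwise.

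The graph $G$ has $n' = 2n + d + 1 = \Theta(n)$ vertices and clique-number $|K| = d+1 \leq c'(\varepsilon)\log n'$, so any algorithm computing the diameter of split graphs in time ${\cal O}\bigl((n')^{2-\varepsilon}\bigr)$ on inputs with clique-number ${\cal O}(\log n')$ would decide OV in ${\cal O}(n^{2-\varepsilon})$ time, contradicting SETH. I expect the only delicate point to be specifying the dummy $c_0$ (and similarly padding any all-zero vectors, if they were allowed in the OV instance) so that the distance gap between ``non-orthogonal'' and ``orthogonal'' cross pairs is not collapsed elsewhere in the graph; everything else is routine.
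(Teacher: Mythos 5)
The paper does not itself prove this lemma; it is cited verbatim from~\cite{BCH16}, so there is no in-paper proof to compare against. Still, your proposed reduction, as written, has a genuine bug. If $c_0$ is adjacent to \emph{every} vertex of $I = A \cup B$, then for any cross pair $a_i, b_j$ the two-edge path through $c_0$ shows $d_G(a_i,b_j) \leq 2$ regardless of orthogonality; in fact $c_0$ is a universal vertex and the graph has diameter at most $2$ always. Your sentence claiming ``$d_G(a_i,b_j)=2$ iff they share a common neighbour in $K \setminus \{c_0\}$'' is thus false on its face: $c_0$ is itself always such a common neighbour. So the construction cannot distinguish yes- from no-instances of OV, and the reduction fails.

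The repair, which is essentially what~\cite{BCH16} does, is to avoid giving the two sides a \emph{common} universal neighbour. Replace your single dummy vertex by two extra coordinates: extend every $a \in A$ with a $1$ in coordinate $d+1$ and a $0$ in coordinate $d+2$, and every $b \in B$ with a $0$ in coordinate $d+1$ and a $1$ in coordinate $d+2$. Take $K = \{c_1,\dots,c_{d+2}\}$ as the clique and join a vector-vertex to $c_k$ exactly when its $k$-th coordinate is $1$. Then $c_{d+1}$ is a neighbour of all of $A$ (but none of $B$) and $c_{d+2}$ of all of $B$ (but none of $A$), so every intra-$A$, intra-$B$, and $K$--$I$ pair is at distance at most $2$, while a cross pair $(a_i,b_j)$ still has distance $2$ iff the original vectors are non-orthogonal (the padding coordinates never create a shared $1$) and distance $3$ otherwise. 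With $|K| = d+2 = {\cal O}(\log n)$ and $|V(G)| = \Theta(n)$, the remainder of your SETH argument goes through unchanged.
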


We deduce from this above Lemma~\ref{lem:ov} the following two ``SETH-hardness'' results:

\begin{theorem}\label{thm:collection-hardness}
For any $\varepsilon > 0$, there exists a $c(\varepsilon)$ s.t., under SETH, we cannot compute the diameter of $n$-vertex graphs in ${\cal O}(n^{2-\varepsilon})$ time, even if we are given as input an isometric embedding of the graph in a system of at most $c(\varepsilon) \log{n}$ spanning trees. 
In particular, under SETH, there is no data structure for \textsc{$\min$-Eccentricities} with ${\cal O}(2^{o(k)}(N+|S|)^{1-o(1)})$ pre-processing time and ${\cal O}(2^{o(k)}(N+|S|)^{o(1)})$ query time, where $N := \sum_{i=1}^k |V(T_i)|$. 
\end{theorem}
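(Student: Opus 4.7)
The plan is a reduction from the SETH-hardness established in Lemma~\ref{lem:ov}. Starting from a split graph $G$ on $n$ vertices with a maximal clique $K$ of size at most $c(\varepsilon)\log n$, I would construct a system of $|K|$ spanning trees by running a breadth-first search from every vertex of $K$, producing BFS trees $T_k$ for $k \in K$. I then need to verify that $\min_{k \in K} d_{T_k}(u,v) = d_G(u,v)$ for every pair $u,v \in V(G)$. This reduces to a short case analysis exploiting that a connected split graph has diameter at most $3$: if one of $u,v$ is in $K$, the BFS tree rooted at it realizes the distance; if $u,v$ are both in the stable set, any shortest $uv$-path must pass through some $k \in K$, and the BFS tree $T_k$ witnesses $d_G(u,v)$ as the sum of the corresponding root-to-leaf distances. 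Hence the proposed embedding is isometric.

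Since a split graph with clique-number ${\cal O}(\log n)$ has at most ${\cal O}(n\log n)$ edges, the $|K|$ BFS trees can be built in ${\cal O}(|K|(n+m)) = {\cal O}(n\log^2 n)$ time. If a hypothetical ${\cal O}(n^{2-\varepsilon})$-time algorithm for diameter accepted such an embedding as part of its input, composing it with this pre-processing would yield a diameter algorithm for split graphs of clique-number $c(\varepsilon)\log n$ running within the bound forbidden by Lemma~\ref{lem:ov}, a contradiction. This establishes the first statement.

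For the ``in particular'' part, I would apply Lemma~\ref{lem:reduction-collection} to the isometric embedding constructed above, fed into a hypothetical data structure for \textsc{$\min$-Eccentricities} with the quoted bounds. With $k = |K| = {\cal O}(\log n)$, $|S| = n$ and $N = |K|n = {\cal O}(n\log n)$, the factor $2^{o(k)}$ simplifies to $n^{o(1)}$, while $(N+|S|)^{1-o(1)} \leq n^{1+o(1)}$ and $(N+|S|)^{o(1)} = n^{o(1)}$. The reduction therefore produces an ${\cal O}(n^{1+o(1)})$-time diameter algorithm for split graphs of clique-number ${\cal O}(\log n)$, once again contradicting Lemma~\ref{lem:ov} for any $\varepsilon > 0$.

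The main obstacle I anticipate is the parameter accounting in the second half: one has to confirm that the subexponential factor $2^{o(k)}$ and every polylogarithmic overhead in $N$ collapse to $n^{o(1)}$ under the substitutions $k = \Theta(\log n)$ and $N = \Theta(n\log n)$. The embedding verification itself is elementary once one invokes the bounded-diameter structure of split graphs, and the pre-processing cost is clearly small enough not to interfere with the hardness transfer.
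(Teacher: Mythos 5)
Your proposal is correct and follows the same route as the paper's proof: reduce from Lemma~\ref{lem:ov} by taking shortest-path trees rooted at each clique vertex, observe that this gives an isometric embedding into a system of $|K| = {\cal O}(\log n)$ spanning trees, and conclude by substituting $k = \Theta(\log n)$, $N = \Theta(n\log n)$, $|S| = n$ into Lemma~\ref{lem:reduction-collection}. One small remark: you justify $d_{T_k}(u,v) = d_G(u,v)$ by saying the $T_k$-path ``witnesses the distance as the sum of root-to-leaf distances,'' but the $T_k$-path between $u$ and $v$ need not pass through $k$; the clean argument is that $d_{T_k}(u,v) \geq d_G(u,v)$ because $T_k$ is a spanning subgraph, while $d_{T_k}(u,v) \leq d_{T_k}(u,k) + d_{T_k}(k,v) = d_G(u,k) + d_G(k,v) = d_G(u,v)$ when $k$ lies on a shortest $uv$-path, giving equality.
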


\begin{proof}
Let $G=(K \cup I,E)$ be a split graph, where $K$ and $I$ are a clique and a stable set, respectively. Note that such a bipartition of $V(G)$ can be computed in linear time~\cite{Gol04}. Let $(T_u)_{u \in K}$ be arbitrary shortest-path trees rooted at every vertex of the clique $K$. The trivial embedding of $G$ into the collection $(T_u)_{u \in K}$, associating to each vertex $v$ its $|K|$ copies in these shortest-path trees, is isometric. If furthermore, $|K| \leq c \log{n}$ for some $c > 0$, then this above collection can be computed in total ${\cal O}(n \log^2{n})$ time (quasi linear). The result now follows from Lemma~\ref{lem:ov}.
\end{proof}

We believe our Theorem~\ref{thm:collection-hardness} to be important since the embeddings of graphs in systems of tree spanners are well-studied in the literature~\cite{DrA14,DCKX12,DXY09,DrY10,DYC06,DYL06,DYX08}.

\begin{theorem}\label{thm:cartesian-hardness}
For any $\varepsilon > 0$, there exists a $c(\varepsilon)$ s.t., under SETH, we cannot compute the diameter of $n$-point metric spaces in ${\cal O}(n^{2-\varepsilon})$ time, even if we are given as input an isometric embedding of the space in a Cartesian product of at most $c(\varepsilon) \log{n}$ tree factors. 
In particular, under SETH, there is no data structure for \textsc{$+$-Eccentricities} with ${\cal O}(2^{o(k)}(N+|S|)^{1-o(1)})$ pre-processing time and ${\cal O}(2^{o(k)}(N+|S|)^{o(1)})$ query time, where $N := \sum_{i=1}^k |V(T_i)|$. 
\end{theorem}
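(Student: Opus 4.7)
The plan is to reduce from Orthogonal Vectors (OV): under SETH, for every $\varepsilon>0$ there exists $c(\varepsilon)$ such that OV on two sets of $n$ binary vectors of dimension $d = c(\varepsilon)\log n$ cannot be solved in $\mathcal{O}(n^{2-\varepsilon})$ time. Before constructing the reduction, I observe that the split-graph route used for Theorem~\ref{thm:collection-hardness} is \emph{not} available here: the split graphs of Lemma~\ref{lem:ov} contain triangles, and no three points at pairwise distance $1$ can sit inside a Cartesian product of trees. Indeed, if $p,q,r$ are such three points, their three pairwise distances $1$ are each realised in a single coordinate; a short case analysis (the three ``witness'' coordinates are all equal, or two agree, or all three differ) combined with the fact that trees contain no triangle forces two of the points to coincide or to be at distance $\geq 2$. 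I therefore give up on embedding a graph and instead place the hard instance directly inside a hypercube.

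First I reduce OV to its equal-weight variant: I bucket $A$ and $B$ by Hamming weight — at most $(d+1)^{2}$ bucket pairs — and pad the lighter side of each pair with extra $1$-coordinates matched by $0$-coordinates on the other side. This preserves inner products, equalises weights, and blows up both the running time and the dimension only by a polylogarithmic factor, staying within the SETH regime. So I may assume $A,B \subseteq \{0,1\}^{d}$, $|A|+|B|=n$, and every vector has common weight $w$. I then append two ``side'' coordinates, setting $\alpha(a) := (a,1,0)$ for $a \in A$ and $\beta(b) := (b,0,1)$ for $b \in B$, obtaining $n$ vectors of weight $w+1$ in dimension $d' := d+2 = \Theta(\log n)$. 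Let each $T_{j}$ ($j \in [d']$) be a single edge; then the Cartesian product $T_{1} \square \cdots \square T_{d'}$ is exactly the hypercube $\{0,1\}^{d'}$ equipped with the Hamming metric. Defining $S := \alpha(A) \cup \beta(B)$ with the induced metric, the identity map is an isometric embedding of $S$ into the Cartesian product of these $d' = c'(\varepsilon)\log n$ trees, and the reduction runs in $\mathcal{O}(n \log n)$ time.

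For correctness, within $\alpha(A)$ the side coordinates contribute a shared $1$, so $\langle \alpha(a),\alpha(a')\rangle = \langle a,a'\rangle + 1 \geq 1$ and the Hamming distance is $2(w+1)-2\langle \alpha(a),\alpha(a')\rangle \leq 2w$; the symmetric bound holds inside $\beta(B)$. For a cross pair the side coordinates contribute $0$, so $\langle \alpha(a),\beta(b)\rangle = \langle a,b\rangle$ and the Hamming distance equals $2(w+1) - 2\langle a,b\rangle$, which attains the maximal value $2w+2$ precisely when $\langle a,b\rangle = 0$. Hence $diam(S) = 2w+2$ iff there is an orthogonal pair $(a,b) \in A \times B$. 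Any $\mathcal{O}(n^{2-\varepsilon})$-time algorithm that computes the diameter of metric spaces given with an isometric embedding into the Cartesian product of $c'(\varepsilon)\log n$ trees would therefore yield a subquadratic algorithm for OV and refute SETH, proving the first assertion.

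For the ``in particular'' clause, I plug the constructed instance into Lemma~\ref{lem:reduction-cartesian}: the parameters are $|S| = n$, $k = \Theta(\log n)$ and $N = \mathcal{O}(\log n)$, since each tree has two nodes. A \textsc{$+$-Eccentricities} data structure with $\mathcal{O}(2^{o(k)}(N+|S|)^{1-o(1)})$ pre-processing and $\mathcal{O}(2^{o(k)}(N+|S|)^{o(1)})$ query time would then, through Lemma~\ref{lem:reduction-cartesian}, compute all eccentricities of $S$ — and hence $diam(S)$ — in total $n^{o(1)} \cdot n^{1-o(1)} + n \cdot (n^{o(1)} + \log n) = n^{1+o(1)}$ time, well below the $n^{2-o(1)}$ barrier established in the first part, contradicting SETH. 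The only conceptually nontrivial step of the whole argument is the non-embeddability observation in the first paragraph; everything after it is a textbook SETH-hardness proof for diameter under the $\ell_{1}$-metric, specialised to $\{0,1\}$-coordinates.
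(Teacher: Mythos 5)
Your proof is correct, and it takes a genuinely different route from the paper's. The paper keeps the split-graph machinery of Lemma~\ref{lem:ov}: for each clique vertex $u$ it builds a depth-three tree $T'_u$ of diameter $4$ (root $u$, its neighbours at depth $1$, a dummy node $u^*$, and the non-neighbours hung off $u^*$ by paths of length $2$), maps each vertex to its $|K|$ images, and observes that the induced metric on $\varphi(V(G))$ inside $\square_{u\in K}T'_u$ has diameter exactly $4|K|$ iff some pair of independent-set vertices is at distance $3$ in $G$. So the paper, like you, does not try to embed the split-graph metric itself — your triangle observation correctly rules that out and is a nice sanity check — but it still routes the reduction through the split graph and builds non-trivial $5$-node trees. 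You instead go straight from Orthogonal Vectors: after the standard bucketing-and-padding step to equalise Hamming weights, you append two marker coordinates and land in a hypercube, i.e. a Cartesian product of $\Theta(\log n)$ single-edge trees, where $diam(S)=2w+2$ iff an orthogonal pair exists. Your version is more elementary (edges rather than custom trees, Hamming rather than tree distances) and avoids the detour through split graphs; the cost is the extra weight-equalisation preprocessing, which only introduces a $\text{poly}\log n$ overhead and $O(\log^2 n)$ separate diameter instances, both harmless under SETH. The paper's version has the advantage of uniformity: it recycles the exact hard instances of Lemma~\ref{lem:ov} and Theorem~\ref{thm:collection-hardness}, so the two hardness theorems are proved from a single source. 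The ``in particular'' clause is handled the same way in both proofs, by instantiating Lemma~\ref{lem:reduction-cartesian} with $|S|=n$, $k=\Theta(\log n)$, and $N=O(\log n)$.

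Two small remarks on presentation. First, your phrase ``only by a polylogarithmic factor'' for the weight-bucketing step should be unpacked slightly: there are at most $(d+1)^2 = O(\log^2 n)$ bucket pairs, each producing a diameter instance on at most $n$ points in dimension at most $2d+2 = O(\log n)$, and the sum of the running times over all pairs is at most $O(\log^2 n)\cdot n^{2-\varepsilon}$, which is still $n^{2-\varepsilon'}$ for a slightly smaller $\varepsilon'$. Second, for the ``in particular'' part you could note explicitly that $2^{o(k)} = n^{o(1)}$ because $k = c(\varepsilon)\log n$, which is the step that converts the claimed data-structure bounds into an $n^{1+o(1)}$-time diameter algorithm.
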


\begin{proof}
As for Theorem~\ref{thm:collection-hardness}, let  $G=(K \cup I,E)$ be a split graph, where $K$ and $I$ are a clique and a stable set, respectively. Again, note that such a bipartition of $V(G)$ can be computed in linear time~\cite{Gol04}. For every $u \in K$, we construct a tree $T'_u$, and an embedding $\varphi_u$ of $G$ into the latter, as follows. We start from a single-node tree, to which we map vertex $u$. Then, for every $v \in N_G(u)$, we add a leaf into the tree, adjacent to the image of $u$, to which we map the vertex $v$. We add another node $u^*$ in $T'_u$, that is also adjacent to the image of $u$ (note that $u^*$ is {\em not} the image of a vertex of $G$). Finally, for every vertex $v \in V(G) \setminus N_G[u]$, we add a leaf node, to which we map vertex $v$, that we connect to $u^*$ by a path of length two. Let $\varphi : V(G) \to V(\square_{u \in K} T'_u)$ be s.t., for every $v \in V(G)$, $\varphi(v) = (\varphi_u(v))_{u \in K}$. The metric space considered is $(\varphi(V(G)),d)$, where $d$ is the sub-metric induced by $d_{\square_{u \in K} T'_u}$ (distances in the Cartesian product). Indeed, let $v,v' \in V(G)$. For every $u \in K$, by construction we have $diam(T_u') = 4$, and so, $d_{T_u'}(\varphi_u(v),\varphi_u(v')) \leq 4$. Specifically, if $u \in \{v,v'\}$ then $d_{T_u'}(\varphi_u(v),\varphi_u(v')) \leq 3$; if $v,v' \in N_G(u)$ then $d_{T_u'}(\varphi_u(v),\varphi_u(v'))  = 2$; otherwise, $dist_{T_u'}(\varphi_u(v),\varphi_u(v')) = 4$. Altogether combined, if $d_G(v,v') = 3$ (in particular, $v,v' \in I$) then we get $d(\varphi(v),\varphi(v')) = 4|K|$, otherwise we get $d(\varphi(v),\varphi(v')) \leq 3 + 4(|K|-1) = 4|K| - 1$. We now conclude by Lemma~\ref{lem:ov}.   
\end{proof}

\section{Eccentricities of $k$-subsets of nodes in a tree}\label{sec:kEccTree} 

We now address the problem of computing the eccentricity of {\em subsets} of nodes, in {\em one} tree, with an application to the diameter problem for cube-free median graphs. 

\begin{theorem}\label{thm:main-tree}
Let $T$ be any $n$-node tree, and let $\alpha : V(T) \to \mathbb{R}$.
After a pre-processing in ${\cal O}(n\log{n})$ time, for any subset $U$ of nodes and function $\beta : U \to \mathbb{R}$, we can compute the value $e_{T,\alpha}(U,\beta) = \max_{v \in V(T)}\min_{u \in U}(\alpha(v) + d_T(v,u) + \beta(u))$ in ${\cal O}(|U|\log^2{n})$ time.
\end{theorem}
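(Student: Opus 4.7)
The plan is to reduce $e_{T,\alpha}(U,\beta)$ to a collection of range-maxima over the Voronoi cells of $(U,\beta)$ in $T$. For $u \in U$, let $\mathrm{cell}(u) := \{ v \in V(T) : u \in \arg\min_{u' \in U}(d_T(v,u')+\beta(u')) \}$. Since these cells are connected subtrees that partition $V(T)$,
\[
e_{T,\alpha}(U,\beta) \;=\; \max_{u \in U}\Bigl(\beta(u) + \max_{v \in \mathrm{cell}(u)}(\alpha(v) + d_T(v,u))\Bigr).
\]
Let $S$ be the (full) Steiner tree of $U$ in $T$, and for $v \in V(T)$ let $s(v) \in S$ be the vertex of $S$ closest to $v$. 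Since $U \subseteq S$ and $S$ is connected, every $v$-to-$u$ path traverses $s(v)$, so $v \in \mathrm{cell}(u) \iff s(v) \in \mathrm{cell}(u) \cap S$, and writing $A(s) := \max_{v : s(v)=s}(\alpha(v)+d_T(v,s))$, we have $\max_{v \in \mathrm{cell}(u)}(\alpha(v)+d_T(v,u)) = \max_{s \in \mathrm{cell}(u) \cap S}(A(s) + d_T(s,u))$.

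Because $|S|$ may be $\Omega(n)$, I would operate on the contracted Steiner tree $S^*$ -- the terminals $U$ plus the branch vertices of $S$, so $|V(S^*)| = O(|U|)$ -- which is built in $O(|U|\log n)$ time via LCA. A two-pass tree DP on $S^*$ yields, for each $s \in V(S^*)$ and each direction of $s$ in $S^*$, the minimum of $d_T(s,u)+\beta(u)$ over the $u$'s lying in that direction. For each edge of $S^*$, corresponding to a $T$-path $x = p_0 - p_1 - \cdots - p_\ell = y$ whose interior is degree-$2$ in $S$, $d_T(p_i,u)+\beta(u)$ is affine in $i$ with slope $\pm 1$ depending on which side of the segment $u$ lies; hence the Voronoi cut $i^* \in [0,\ell]$ is the crossover of two affine functions of opposite slopes and is read off in $O(1)$, so the entire cell structure is encoded in $O(|U|)$ space.

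The main obstacle is then to compute, for each cell-portion of each Steiner segment, the range maximum $\max_i(A(p_i) + d_T(p_i,u))$ in $O(\log^2 n)$ time. Using the heavy-light decomposition of $T$, each Steiner segment splits into $O(\log n)$ maximal pieces lying entirely in one heavy path. For $p_j$ interior to such a piece, the two on-$S$ neighbors of $p_j$ are precisely its heavy neighbors, so $A(p_j)$ coincides with the query-independent quantity
\[
A^H(p_j) \;:=\; \max\Bigl(\alpha(p_j),\ \max_{q\ \text{light neighbor of}\ p_j}\mathrm{ecc}(p_j,q)\Bigr),
\]
where $\mathrm{ecc}(p,q) := \max\{\alpha(v)+d_T(v,p) : v~\text{in the subtree of}~T~\text{through neighbour}~q~\text{of}~p\}$. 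All $\mathrm{ecc}(p,q)$ are precomputable in $O(n)$ time by a classical re-rooting DP, and maintaining on each heavy path $H = h_1 - \cdots - h_m$ a segment tree over the arrays $A^H(h_j)+j$ and $A^H(h_j)-j$ (indexed by $j$) answers each heavy-path range-max in $O(\log n)$. This yields $O(\log^2 n)$ per Steiner segment.

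At the $O(\log n)$ transition vertices of a segment (where it switches heavy paths via a light edge) and at each of the $O(|U|)$ critical vertices of $S^*$, the on-$S$ neighbors to exclude from $A(s)$ are not just the two heavy ones; $A(s)$ is then read off in amortized constant time from a per-vertex list of the $\mathrm{ecc}(s,\cdot)$ values sorted in decreasing order, by walking past the on-$S$ neighbors to skip. Since $\sum_{s \in V(S^*)} \deg_{S^*}(s) = O(|U|)$, the total work spent on these special vertices amortizes to $O(|U|\log n)$. Summing the $O(\log^2 n)$ per-segment cost over the $O(|U|)$ Steiner segments yields the announced $O(|U|\log^2 n)$ query time, while the preprocessing (heavy-light decomposition, LCA, re-rooting DP for $\mathrm{ecc}$, per-vertex sorted eccentricity lists, and per-heavy-path segment trees) fits in $O(n\log n)$ time.
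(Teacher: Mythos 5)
Your proposal is correct, and it takes a genuinely different route from the paper's. The paper never explicitly forms the Steiner tree of $U$: instead it recurses top-down over the heavy-path tree $T'$, processing the root heavy path $P_0$ at each level (projecting $U$ onto $P_0$, computing a quantity $e_{int}(U)$ for the nodes not in the subtrees ${\cal F}_U$ hanging below projection points, answering the necessary path range maxima via precomputed Cartesian trees of $h^r(v_j)\pm j$), and then passes modified inputs $(T_j,U_j,\beta_j)$ into each subtree of ${\cal F}_U$, adding the subtree root with a suitably adjusted $\beta$-value to encode the contribution of the outside. The $O(\log n)$ recursion depth times the $O(|U|\log n)$ per-level cost yields $O(|U|\log^2 n)$. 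You instead make the additively weighted Voronoi partition of $T$ explicit, reduce to $\max_{s\in S}(A(s)+M(s))$ over the full Steiner tree $S$ of $U$, and then observe that it suffices to work on the contracted Steiner tree $S^*$ together with the heavy-path pieces of its segments — replacing the paper's recursion-with-boundary-bookkeeping by a single non-recursive pass over $O(|U|\log n)$ heavy-path ranges. This is arguably more modular (the additive Voronoi cut on each Steiner segment is a clean one-crossover fact, and $A^H$ is a static per-vertex quantity), at the cost of a few implementation subtleties you gloss over: excluding the top vertex of each heavy path from any range query (it has only one heavy neighbour, so $A^H$ would over-count its parent's subtree, though in fact it can never be piece-interior so this is safe), and a constant-time membership test for ``$q$ is an $S$-neighbour of $s$'' when walking the sorted $\mathrm{ecc}(s,\cdot)$ lists (achievable by temporary marks). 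Your segment trees give $O(\log n)$ per heavy-path range rather than the paper's $O(1)$ Cartesian-tree queries, but since the segment count already carries a $\log n$ factor, the final $O(|U|\log^2 n)$ bound is the same. The preprocessing budgets also match: the $O(n\log n)$ term comes in the paper from sorting the child-path lists $L(v)$ and in yours from sorting each vertex's eccentricity list.
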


In particular, after an ${\cal O}(n\log{n})$-time pre-processing we can compute $e_T(U)$ in ${\cal O}(|U|\log^2{n})$ time by setting $\alpha(v) = 0$ for every node $v \in V(T)$ and $\beta(u) = 0$ for every node $u \in U$. 

Let us briefly discuss why we need to consider {\em another} decomposition technique than the centroid decomposition.
Roughly speaking, the centroid decomposition of a tree recursively disconnects the input into balanced subtrees.
After $i$ recursive steps, in the intermediate forest  of subtrees computed until then, it is interesting to keep track, for each subtree, of its contact points with the other subtrees of the forest ({\it i.e.}, the nodes onto the unique path in the input between the subtree considered and the other subtrees of the forest). However, there may be up to $\Omega(i)$ such contact points. In particular, throughout the recursive procedure, the number of contact points per subtree may become non-constant. The classic {\em heavy-path decomposition}~\cite{SlT83} offers us a better control over these contact points. We refer to Fig.~\ref{fig:hp} for a running example. Formally, to every node $v \in V(T)$ that is not a leaf, we associate a child node $u$ such that the number of nodes in the subtree rooted at $u$ is maximized (if they are many possibilities for $u$, then we break ties arbitrarily). Equivalently, we select the edge $uv$. Then, the edges selected during this greedy procedure induce a forest of paths, with each path being called a {\em heavy-path}. The {\em heavy-path tree} (for short, HP-tree) is the rooted tree obtained from $T$ by contracting each heavy path to a single node. Finally, the classic heavy-path decomposition consists in, being given $T$, to output a HP-tree and, for each node of the latter, the corresponding heavy-path in $T$. A heavy-path decomposition can be computed in linear time~\cite{SlT83}. Furthermore, the same as for the centroid decomposition, it can be easily checked that the height of a HP-tree, for an $n$-node rooted tree, is in ${\cal O}(\log{n})$.

\begin{figure}[!ht]
    \centering
    \begin{subfigure}{.45\textwidth}
    	\begin{center}
			\includegraphics[width=\textwidth]{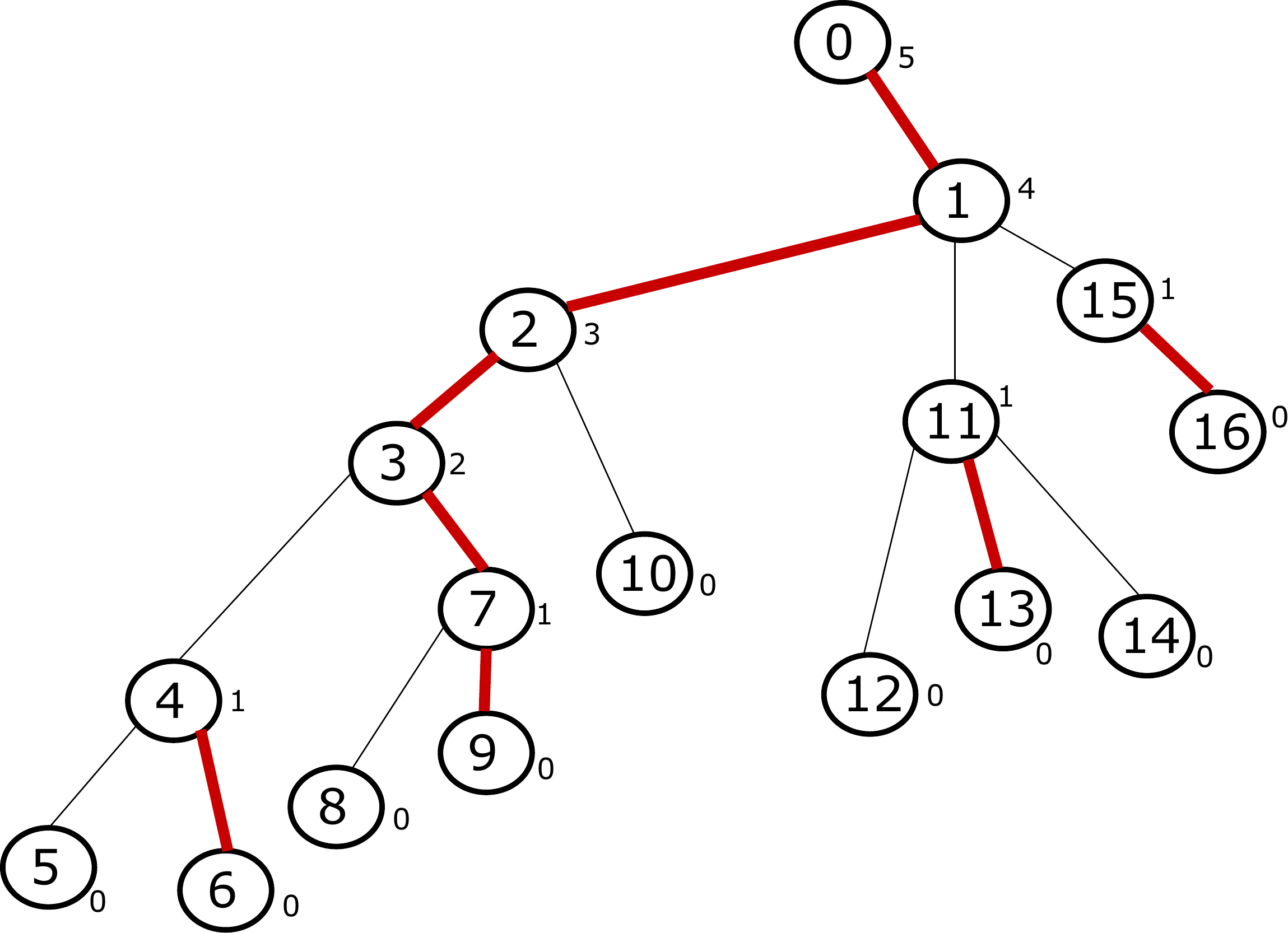}
		\end{center}
        \caption{A tree $T$ (arbitrarily rooted at $0$). Red bold edges are those selected for the heavy-path decomposition. We indicate, next to every node $v$, the height of the corresponding rooted subtree.}
     \end{subfigure}\hfill
     \begin{subfigure}{.4\textwidth}
    	\begin{center}
			\includegraphics[width=\textwidth]{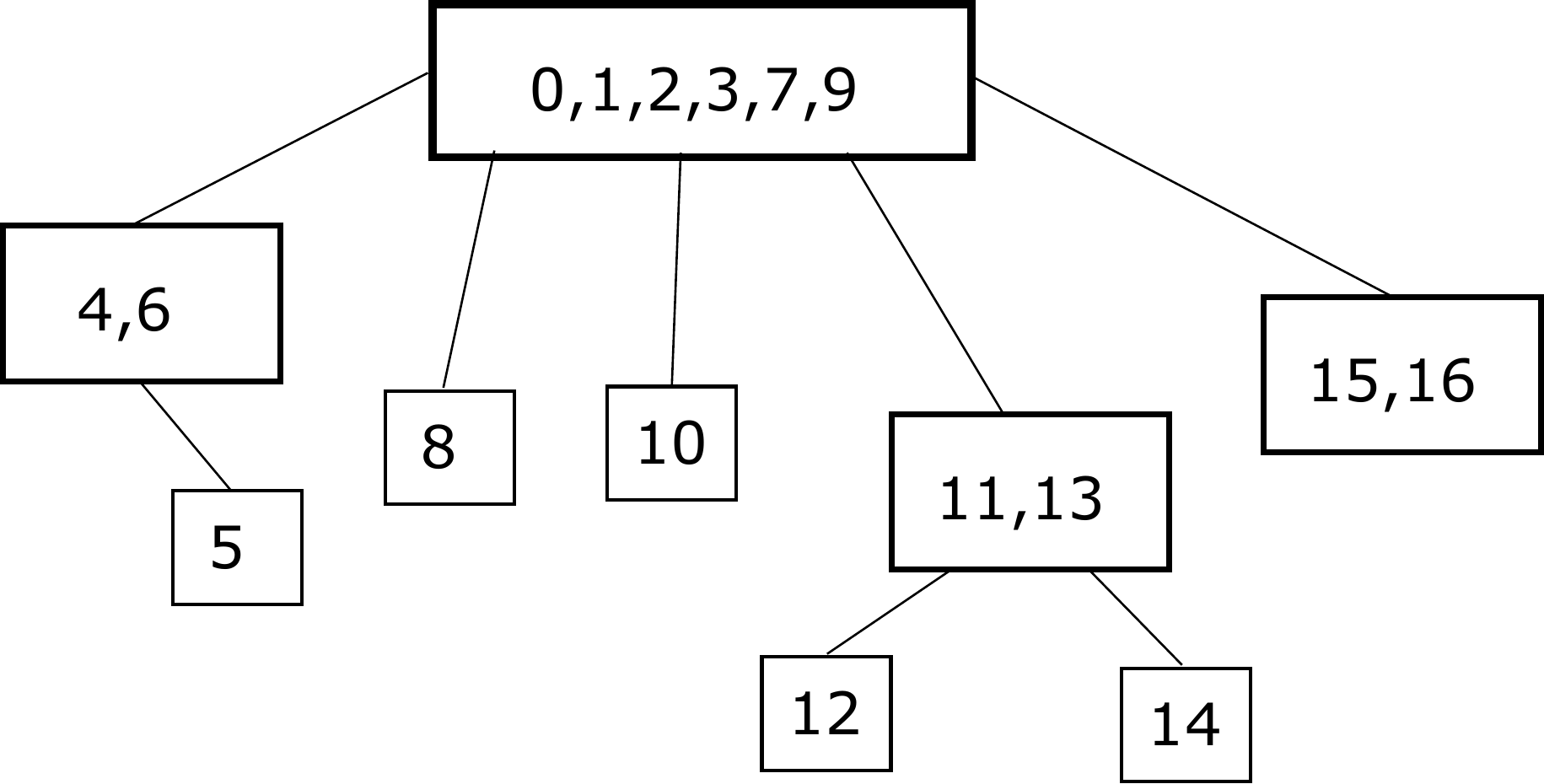}
		\end{center}
        \caption{Heavy-path tree $T'$.}
     \end{subfigure}\vfill
    \caption{Heavy-path decomposition.}
    \label{fig:hp}
\end{figure}

\begin{proof}[Proof of Theorem~\ref{thm:main-tree}.]
During the pre-processing phase, we root the tree $T$ arbitrarily and we compute a {\em heavy-path decomposition}. It takes ${\cal O}(n)$ time. Let us denote by $T'$ the resulting HP-tree and by ${\cal H} \approx V(T')$ the corresponding set of heavy-paths. By using a standard dynamic programming approach, we compute the following set of information for every node $v \in V(T)$ (see also Fig.~\ref{table:preproc}):

\begin{figure}[!ht]
\begin{center}

\begin{subfigure}{.6\textwidth}
\begin{tabular}{l|l|l|l|l|l|l|l}
$v$ & $P_v$ & $\ell(v)$ & $L(v)$ & $h(v)$ & $h^r(v)$ \\
\hline
$0$ & $P_0$  & $0$ &  & $5$ & $0$ \\
\hline
$1$ & $P_0$ & $1$ & $P_5,P_8$ & $4$ & $2$ \\ 
\hline
$2$ & $P_0$ & $2$ & $P_4$ & $3$ & $1$ \\
\hline
$3$ & $P_0$ & $3$ & $P_1$ & $2$ & $2$  \\  
\hline
$4$ & $P_1$ & $0$ & $P_2$ & $1$ & $1$  \\
\hline
$5$ & $P_2$ & $0$ &  & $0$ & $0$  \\ 
\hline
$6$ & $P_1$ & $1$ &  & $0$ & $0$  \\
\hline
$7$ & $P_0$ & $4$ & $P_3$ & $1$ & $1$  \\ 
\hline
$8$ & $P_3$ & $0$ &  & $0$ & $0$ \\
\hline
$9$ & $P_0$ & $5$ &  & $0$ & $0$  \\ 
\hline
$10$ & $P_4$ & $0$ &  & $0$ & $0$  \\
\hline
$11$ & $P_5$ & $0$ & $P_6,P_7$ & $1$ & $1$  \\  
\hline
$12$ & $P_6$ & $0$ &  & $0$ & $0$  \\
\hline
$13$ & $P_5$ & $1$ &  & $0$ & $0$ \\ 
\hline
$14$ & $P_7$ & $0$ &  & $0$ & $0$ \\
\hline
$15$ & $P_8$ & $0$ &  & $1$ & $0$ \\ 
\hline
$16$ & $P_8$ & $1$ &  & $0$ & $0$ 
\end{tabular}
\end{subfigure}\hfill
\begin{subfigure}{.3\textwidth}
\begin{tabular}{l|l|l|l|l}
$P$ & $V(P)$ & $v_P$ & $h(P)$ \\
\hline
$P_0$ & $0,1,2,3,7,9$ & $0$ & $5$  \\
\hline
$P_1$ & $4,6$ & $4$ & $1$ \\
\hline
$P_2$ & $5$ & $5$ & $0$ \\
\hline
$P_3$ & $8$ & $8$ & $0$ \\  
\hline
$P_4$ & $10$ & $10$ & $0$ \\
\hline
$P_5$ & $11,13$ & $11$ & $1$ \\
\hline
$P_6$ & $12$ & $12$ & $0$ \\
\hline
$P_7$ & $14$ & $14$ & $0$ \\  
\hline
$P_8$ & $15,16$ & $15$ & $1$   
\end{tabular}
\end{subfigure}
\caption{Pre-processing.}
\label{table:preproc}
\end{center}
\end{figure}

\begin{enumerate}
\item a pointer to the unique $P_v \in {\cal H}$ s.t. $v \in V(P_v)$. For that, it is sufficient to scan the paths in ${\cal H}$, and to store in an array, for each $v \in V(T)$, a pointer to $P_v$.
\item the root $v_{P_v}$ of $P_v$ (closest node to the root of $T$);
\item the distance $\ell(v) := d_T(v,v_{P_v})$. In particular, the nodes of a heavy-path of length $p$ are numbered $0,1,\ldots,p-1$.
\item the height values $h(v) := \max\{ d_T(v,x) + \alpha(x) \mid x \in V(T_v) \}$ and $h^r(v) = \max\{ d_T(v,x) + \alpha(x) \mid x \in V(T_v^r) \}$, where $T_v$ and $T_v^r$ are the subtrees rooted at $v$ in the tree $T$ and in the forest $T \setminus E(P_v)$, respectively. Note that all the values $h(v), h^r(v)$ can be computed in total ${\cal O}(n)$ time, by using a standard dynamic programming approach. 
\item finally, the list $L(v) := (P^v_1,P^v_2,\ldots,P^v_d)$, ordered by nonincreasing values of $h(P^v_i) := h(v_{P^v_i})$, of pointers to all paths $P^v_i \in {\cal H}$ whose root $v_{P^v_i}$ is a child node of $v$ in $T$. The unordered content of each list $L(v)$ can be computed in total ${\cal O}(n)$ time, as follows: for each $P \in {\cal H}$ we put it in $L(r(P))$, where $r(P)$ denotes the father of its root $v_P$. Then, all lists can be sorted in total ${\cal O}(n\log{n})$ time.
\end{enumerate}
Overall, the above pre-processing phase takes total time in ${\cal O}(n\log{n})$. We end up pre-processing each $P \in {\cal H}$ s.t. the following type of range queries can be answered in ${\cal O}(1)$ time: for each range $(\ell_{\min},\ell_{\max})$, find a $v \in V(P)$ s.t. $\ell(v) \in (\ell_{\min},\ell_{\max})$ and the value $h^r(v) - \ell(v)$ (resp., $h^r(v) + \ell(v)$) is maximized. Specifically, let us consider the sequence $(h^r(v_0),h^r(v_1)-1,\ldots,h^r(v_{p-1})-p+1)$. A {\em Cartesian tree} for that sequence has for root some index $j$ s.t. $h^r(v_j) - j$ is maximized, for left subtree a Cartesian subtree for the sub-sequence $(h^r(v_0),h^r(v_1)-1,\ldots,h^r(v_{j-1})-j+1)$, and for right subtree a Cartesian tree for the sub-sequence $(h^r(v_{j+1})-j-1,h^r(v_{j+2})-j-2,\ldots,h^r(v_{p-1})-p+1)$. See Fig.~\ref{fig:pre-processing} for an illustration. A Cartesian tree can be constructed in ${\cal O}(|V(P)|)$ time, and so, in total ${\cal O}(n)$ time; furthermore, being given a Cartesian tree, we can solve the aforementioned type of range queries in ${\cal O}(1)$ time, by a simple reduction to least-common ancestor queries~\cite{GBT84}. We proceed similarly for the sequence $(h^r(v_0),h^r(v_1)+1,\ldots,h^r(v_{p-1})+p-1)$.

\begin{figure}[!ht]
    \centering
	\includegraphics[width=.4\textwidth]{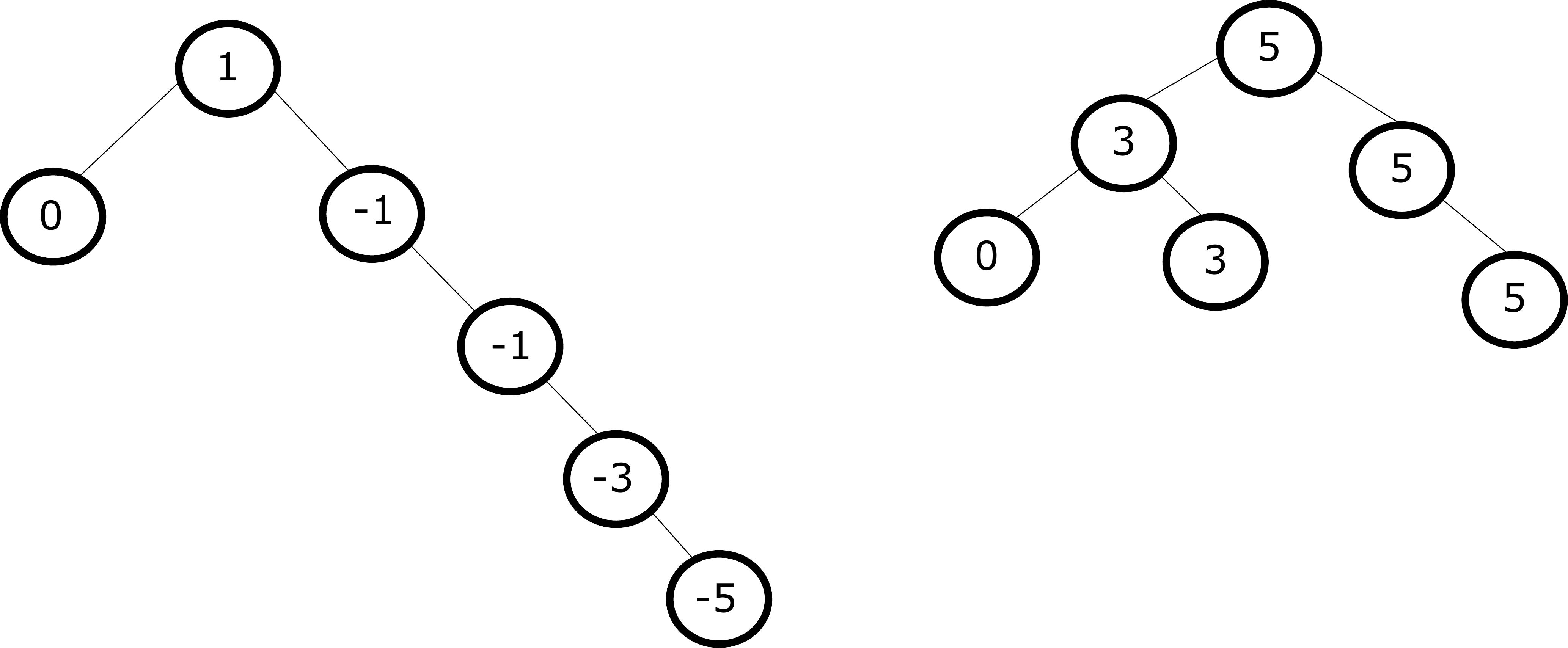}
    \caption{Cartesian trees for the heavy-path $P_0$ (root of the HP-tree).}
    \label{fig:pre-processing}
\end{figure}

\medskip
{\bf Answering a query.}
Let $U \subseteq V(T)$ and $\beta : U \to \mathbb{R}$ be arbitrary. Next, we sketch a recursive procedure over the HP-tree $T'$ in order to compute $e_{T,\alpha}(U,\beta)$.  We observe that some operations below could be optimized, nevertheless we refrained ourselves to do so since the latter operations were not the main bottleneck of the algorithm. In what follows, we denote the root of $T'$ by $P_0$.
We compute $Pr(U,P_0)$, {\it i.e.}, for each node $u \in U$, we add into this set the closest node to $u$ in $P_0$. For that, for all $u \in U$, we construct the $P_uP_0$-path in $T'$, denoted ${\cal Q}_u$. Since the height of $T'$ is in ${\cal O}(\log{n})$, it can be done in ${\cal O}(|U|\log{n})$ time. Let $Pr(u,P_0)$ be the closest node to $u$ in $P_0$. We can compute $Pr(u,P_0)$ from ${\cal Q}_u$ in ${\cal O}(1)$ time. We order $Pr(U,P_0) = (u_1',u_2',\ldots,u_s')$ by increasing $\ell( \cdot )$ ({\it i.e.}, by increasing distance to $v_{P_0}$). Again, it can be done in ${\cal O}(|U|\log{n})$ time. Then, let ${\cal F}_U$ be the forest of $\leq |U|$ subtrees of $T \setminus P_0$ that contain nodes of $U$. We gradually compute $e_{int}(U) := \max_{v \in V(T \setminus {\cal F}_U)} \min_{u \in U} (\alpha(v) + d_T(v,u) + \beta(u))$, then we recurse on each subtree of ${\cal F}_U$. 
More specifically, we decompose the task of computing $e_{int}(U)$ as follows. 

\smallskip
{\it Step 1.} For each $u_i' \in Pr(U,P_0)$, we compute $D(u_i') := \min_{u \in U} (\alpha(u_i') + d_T(u_i',u) + \beta(u))$. For that, there are several intermediate steps.
\begin{enumerate}[label=\alph*]
\item ({\it Going down into the rooted subtree}) We initialize $D(u_i') = \min\{ \beta(u) + d_T(u,u_i') + \alpha(u_i') \mid u_i' = Pr(u,P_0) \}$. If $u_i' \in U$, then $u_i' = Pr(u_i',P_0)$, and we may start with $D(u_i') = \alpha(u_i') + \beta(u_i')$ (otherwise, we start with $D(u_i') = +\infty$). Then, for every $u \in U$ s.t. $Pr(u,P_0) = u_i'$, $u \neq u_i'$, let ${\cal Q}_u = (P_u = P^0, P^1, \ldots, P^d = P_0)$ be the $P_uP_0$-path in $T'$. Recall that, $\forall 0 \leq j \leq d-1$, there is an edge between the root of $P^j$ and its father $r(P^j)$. Then, we have $d_T(u,u_i') = \ell(u) + \left[\sum_{j=1}^{d-1}\ell(r(P^{j-1}))\right] + d$. As a result, we can implement this initialization step in total ${\cal O}(\sum_{u \in U}|{\cal Q}_u|) = {\cal O}(|U|\log{n})$ time.
\item ({\it Going backward onto the heavy-path}) For $i$ from $2$ to $s$, we consider the nodes $u_i' \in Pr(U,P_0)$ sequentially. Note that $d_T(u_i',u_{i-1}') = \ell(u_i') - \ell(u_{i-1}')$. If $D(u_{i-1}') + (\ell(u_i') - \ell(u_{i-1}')) + (\alpha(u_i') - \alpha(u_{i-1}')) < D(u_i')$, then we update $D(u_i') :=  D(u_{i-1}') + (\ell(u_i') - \ell(u_{i-1}')) + (\alpha(u_i') - \alpha(u_{i-1}'))$. It takes ${\cal O}(|Pr(U,P_0)|) = {\cal O}(|U|)$ time.
\item ({\it Going forward onto the heavy-path}) For $i$ from $s-1$ downto $1$, we consider the nodes $u_i' \in Pr(U,P_0)$ sequentially. Note that $d_T(u_i',u_{i+1}') = \ell(u_{i+1}') - \ell(u_i')$. If $D(u_{i+1}') + (\ell(u_{i+1}') - \ell(u_i')) + (\alpha(u_i') - \alpha(u_{i+1}')) < D(u_i')$, then we update $D(u_i') :=  D(u_{i+1}') + (\ell(u_{i+1}') - \ell(u_i')) + (\alpha(u_i') - \alpha(u_{i+1}'))$. It takes ${\cal O}(|Pr(U,P_0)|) = {\cal O}(|U|)$ time.
\end{enumerate}

\smallskip
{\it Step 2.} We search a $v \notin V(P_0 \cup {\cal F}_U)$ s.t. the closest node to $v$ in $P_0$ belongs to $Pr(U,P_0)$ and $\min_{u \in U}(\alpha(v) + d_T(v,u) + \beta(u))$ is maximized. For that, for each $u_i' \in Pr(U,P_0)$, we scan the ordered list $L(u_i')$ until we find a $P_i$ s.t. $P_i \not\subseteq V({\cal F}_U)$. Then, the searched value is exactly $\max_i (h(P_i)+1+D(u_i')-\alpha(u_i'))$. If we can decide in ${\cal O}(1)$ time whether, being given a heavy-path $P_i'$, $P_i' \subseteq V({\cal F}_U)$, then (since in addition there can only be ${\cal O}(|{\cal F}_U|) = {\cal O}(|U|)$ heavy-paths $P_i'$ with this property) this above intermediate step can be done in total ${\cal O}(|U|)$ time. For that, for all $u \in U$, we consider the path ${\cal Q}_u$ and (if $|{\cal Q}_u| \geq 2$) we mark the penultimate path (child of $P_0$ in $T'$). Then, $P_i' \subseteq V({\cal F}_U)$ if and only if $P_i'$ was marked.

\smallskip
{\it Step 3.} We end up searching a $v \notin V({\cal F}_U)$ s.t. its closest node in $P_0$ does {\em not} belong to $Pr(U,P_0)$, and $\min_{u \in U}(\alpha(v) + d_T(v,u) + \beta(u))$ is maximized. For some $1 \leq i < s$, let us consider all such $v$ whose closest node in $P_0$ lies strictly between $u_i',u_{i+1}'$. We denote by $v'$, $\ell(u_i') < \ell(v') < \ell(u_{i+1}')$ the closest node to $v$ in $P_0$. Then, 
\begin{align*}
\min_{u \in U}(\alpha(v) + d_T(v,u) + \beta(u)) &= \alpha(v) + d_T(v,v') \\ 
&+ \min\{d_T(v',u_{i}') + D(u_i') - \alpha(u_i'),d_T(v',u_{i+1}') + D(u_{i+1}') - \alpha(u_{i+1}')\} \\ \\
&\leq h^r(v') + \min\{\ell(v') - \ell(u_i') + D(u_i') - \alpha(u_i'), \ell(u_{i+1}') - \ell(v') + D(u_{i+1}') \\
&- \alpha(u_{i+1}')\}.
\end{align*}
 We set $\ell_{\lim} := \frac{(D(u_{i+1}') - D(u_i')) + \ell(u_i') + \ell(u_{i+1}')+ (\alpha(u_i')- \alpha(u_{i+1}'))}2$. Indeed, $\ell(v') - \ell(u_i') + D(u_i') - \alpha(u_i') \leq \ell(u_{i+1}') - \ell(v) + D(u_{i+1}') - \alpha(u_{i+1}')$ if and only if $\ell(v') \leq \ell_{\lim}$. Therefore, we are left with two range queries: find a $v' \in V(P_0)$, $\ell(u_i') < \ell(v') \leq \ell_{\lim}$, maximizing $h^r(v') + \ell(v')$; resp. find a $v' \in V(P_0)$, $\ell_{\lim} \leq \ell(v') < \ell(u_{i+1}')$, maximizing $h^r(v') - \ell(v')$. Both queries can be answered in ${\cal O}(1)$ time, and so, this last step takes total time in ${\cal O}(|U|)$. The special case of nodes whose closest ancestor in $P_0$ lies before $u_1'$, resp. after $u_s'$, is dealt with in a similar fashion.

\smallskip
After computing $e_{int}(U)$, we consider each subtree of ${\cal F}_U$. Let $T_j$ be such a subtree, it starts with some child $P_j$ of the root path $P_0$ (extracted from the $P_uP_0$-paths, for $u \in U$). Observe that $T_j'$: the subtree of $T'$ rooted at $P_j$, is a HP-tree of $T_j$. In particular, there is no need pre-processing $T_j$. Then, let us define $D_j := \min\{ d_T(v_{P_j},u) + \beta(u) \mid  u \in U \setminus V(T_j) \}$. Finally, we set $U_j := (U \cap V(T_j)) \cup \{v_{P_j}\}$ and $\beta_j : U_j \to \mathbb{R}$ s.t.: 
\begin{itemize}
\item if $u_j \in U_j \setminus \{v_{P_j}\}$, then $\beta_j(u_j) = \beta(u_j)$;
\item if $v_{P_j} \notin U$ then $\beta_j(v_{P_j}) = D_j$, otherwise $\beta_j(v_{P_j}) = \min\{D_j,\beta(v_{P_j})\}$.
\end{itemize}
We apply our query-answering algorithm recursively, for the new input $T_j,U_j,\beta_j$. In particular, we are left explaining the computation of $D_j$. Next, we present a way we can adapt our above techniques in order to compute $e_{int}(U)$ so as to also compute all the values $D_j, \ \text{for} \ T_j \in {\cal F}_U$, in total ${\cal O}(|U|\log{n})$ time:

\smallskip
{\it Step 4.} We consider the nodes $u_i' \in Pr(U,P_0)$, for $1 \leq i \leq s$, sequentially. There are two intermediate distance values to compute, namely:
\begin{align*}D_{\leftarrow}(u_i') &= \min\{ d_T(u_i',u) + \beta(u) \mid u \in U, d_T(u_i',u) = d_T(u_i',u_{i-1}') + d_T(u_{i-1}',u) \}, \\
D_{\rightarrow}(u_i') &= \min\{ d_T(u_i',u) + \beta(u) \mid u \in U, d_T(u_i',u) = d_T(u_i',u_{i+1}') + d_T(u_{i+1}',u) \},
\end{align*}
with the understanding that $D_{\leftarrow}(u_1') = D_{\rightarrow}(u_s') = +\infty$. For that, it suffices to adapt the double-scan routine of Step $1$. Specifically, in order to compute all the values $D_{\leftarrow}(u_i')$, it suffices to execute Steps $1a$ and $1b$ (setting the values $\alpha(u_i')$ to $0$). In the same way, in order to compute all the values $D_{\rightarrow}(u_i')$, it suffices to execute Steps $1a$ and $1c$ (setting the values $\alpha(u_i')$ to $0$). Hence, the total running time of this intermediate step is in ${\cal O}(|U|\log{n})$.

\smallskip
{\it Step 5.} We keep all heavy-paths $P_j \in L(u_i')$ s.t. $P_j \subseteq V({\cal F}_U)$ in a separate sub-list $W(u_i')$. It takes total time in ${\cal O}(|U|)$. We extract from $W(u_i')$ the heavy-path $P_{j_1}$ minimizing $\min\{ d_T(u_i',u) + \beta(u) \mid u \in U \cap V(T_{j_1})  \}$ (if $W(u_i') \neq \emptyset$). We set $D^{(1)}_{\downarrow}(u_i') = \min\{ d_T(u_i',u) + \beta(u) \mid u \in U \cap V(T_{j_1})  \}$. In the same way, we extract from $W(u_i') \setminus \{P_{j_1}\}$ the heavy-path $P_{j_2}$ minimizing $\min\{ d_T(u_i',u) + \beta(u) \mid u \in U \cap V(T_{j_2})  \}$ (if $|W(u_i')| > 1$). We set $D^{(2)}_{\downarrow}(u_i') = \min\{ d_T(u_i',u) + \beta(u) \mid u \in U \cap V(T_{j_2})  \}$. Since we are given the paths ${\cal Q}_u, \ \forall u \in U$, this can be done in total ${\cal O}(|U|\log{n})$ time ({\it i.e.}, see Step $1a$ for the computation of the distances between $u_i'$ and the nodes $u \in U$ s.t. $Pr(u,P_0) = u_i'$).

\smallskip
{\it Step 6.} We set $D_{=}(u_i') = + \infty$ if $u_i' \notin U$, otherwise we set $D_{=}(u_i') = \beta(u_i')$. Then, we consider all heavy-paths $P_j \in W(u_i')$ sequentially. If $P_j \neq P_{j_1}$, then we get: $$D_j = 1 + \min\{D_{=}(u_i'),D_{\leftarrow}(u_i'),D_{\rightarrow}(u_i'),D^{(1)}_{\downarrow}(u_i')\}.$$ Otherwise, we get: $$D_j = 1 + \min\{D_{=}(u_i'),D_{\leftarrow}(u_i'),\\D_{\rightarrow}(u_i'),D^{(2)}_{\downarrow}(u_i')\}.$$

\smallskip
A few more observations are needed in order to prove the query time complexity.
At the $t^{th}$ recursive stage of the algorithm, for $t \geq 0$, we deal with subtrees $T_j$, and corresponding HP-subtrees $T_j'$, with the property that the root of $T_j'$ is at a distance exactly $t$ to the root of $T'$. In particular, the number of recursive stages is no more than the depth of $T'$, and so it is in ${\cal O}(\log{n})$. Furthermore, if we denote by $U_j$ the associated node-subsets, then the running-time of the $t^{th}$ recursive stage is in ${\cal O}((\sum_j|U_j|)\log{n})$. Hence, we are left upper bounding $\sum_j|U_j|$. For that, we observe that, for any fixed $j$, all the nodes of $U_j$, except maybe the root of $T_j$, are also nodes of the original node-subset $U$. Furthermore, since every subtree $T_j$ considered must contain a node of $U$, there are at most $|U|$ of them at any recursive stage. As a result, $\sum_j|U_j| \leq 2|U|$. It implies a complexity in ${\cal O}(|U|\log^2{n})$ for the whole query-answering algorithm.
\end{proof}

\subsection{Application to the diameter problem on cube-free median graphs}\label{sec:diam}

Recall that a graph $G=(V,E)$ is called {\em median} if, for any triple $x,y,z \in V$, there exists a vertex $c$ that is simultaneously on some shortest $xy$-, $yz$- and $zx$-paths. Recently, it was shown that a {\em centroid} in a median graph $G$ can be computed in linear time, with a centroid being any vertex $c$ minimizing $\sum_{v \in V}d_G(v,c)$~\cite{BCCV20}. -- For trees, that are a subclass of median graphs, this definition of centroid is equivalent to the one given in Sec.~\ref{sec:algo}. -- In the technical version of this paper~\cite{BCCVR19}, the authors also asked whether the diameter of a median graph can be computed faster than by reducing to APSP. We make a new step in this direction:

\begin{theorem}\label{thm:cube-free-median}
There is an ${\cal O}(n\log^3{n})$-time algorithm for computing the diameter of $n$-vertex cube-free median graphs.
\end{theorem}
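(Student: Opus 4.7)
The plan is to combine Theorem~\ref{thm:main-tree} with the distance labeling scheme of Chepoi, Labourel and Ratel~\cite{CLR20} for cube-free median graphs, whose central ingredient is a family of pseudo-gated trees. First I would extract from~\cite{CLR20} the ingredients we need: a collection $\mathcal{T} = \{T_1, T_2, \ldots\}$ of pseudo-gated trees of total order $\mathcal{O}(n\log n)$, computable in $\mathcal{O}(n\log^2 n)$ time, together with, for every vertex $u \in V(G)$, a sublist $\mathcal{T}(u) \subseteq \mathcal{T}$ of size $\mathcal{O}(\log n)$, a ``gate'' $u_T \in V(T)$ for every $T \in \mathcal{T}(u)$, and the precomputed offset $\delta_T(u) = d_G(u,u_T)$. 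The characterizing property, organized by an underlying hierarchical separator decomposition, is that for every pair $u,v$ there exists a canonical common tree $T = T(u,v) \in \mathcal{T}(u)\cap\mathcal{T}(v)$ (the lowest such tree in the hierarchy) with $d_G(u,v) = \delta_T(u) + d_T(u_T, v_T) + \delta_T(v)$; assigning each pair uniquely to $T(u,v)$ partitions $\binom{V(G)}{2}$ according to $\mathcal{T}$.

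Next I would reduce the computation of every eccentricity to a collection of queries of the form handled by Theorem~\ref{thm:main-tree}. Fix a vertex $u$ and a tree $T \in \mathcal{T}(u)$. Let $U_{T,u} = \{v_T : v \in V(G) \text{ with } T(u,v) = T\}$, and for $x \in U_{T,u}$ let $\beta_{T,u}(x) = \min\{\delta_T(v) : v_T = x,\, T(u,v) = T\}$. The contribution of $T$ to $e(u)$ is then
\[
e_{T}(u) = \max_{v\,:\,T(u,v)=T} \bigl[\delta_T(u) + d_T(u_T,v_T) + \delta_T(v)\bigr] = \delta_T(u) + e_{T,\alpha\equiv 0}(U_{T,u},\beta_{T,u}),
\]
which is exactly an instance of the query of Theorem~\ref{thm:main-tree} on $T$. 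Since $e(u) = \max_{T \in \mathcal{T}(u)} e_T(u)$ by the partition property, each eccentricity reduces to $|\mathcal{T}(u)| = \mathcal{O}(\log n)$ such queries.

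For the running time, I would first pre-process each tree $T_i \in \mathcal{T}$ once in $\mathcal{O}(|V(T_i)|\log|V(T_i)|)$ time, for a total of $\mathcal{O}(n\log^2 n)$. Then, processing all vertices $u$, the total query workload is $\sum_u \sum_{T \in \mathcal{T}(u)} |U_{T,u}|\log^2 n$; but each pair $(u,v)$ is charged to exactly one tree $T = T(u,v)$, contributing one unit to $|U_{T,u}|$ and one unit to $|U_{T,v}|$, so $\sum_u \sum_T |U_{T,u}| = \mathcal{O}(n \cdot \max_u|\mathcal{T}(u)| + \text{(pairs)}/?)$. The cleanest way is to observe that for each $u$, $\sum_{T \in \mathcal{T}(u)}|U_{T,u}| \le n$, giving the per-vertex cost $\mathcal{O}(n\log^2 n)$ and the global cost $\mathcal{O}(n^2\log^2 n)$, which is too slow; hence I need the sharper amortization $\sum_T \sum_u |U_{T,u}| = \mathcal{O}(n\log n)$, following from $\sum_v |\mathcal{T}(v)| = \mathcal{O}(n\log n)$ and re-indexing the double sum, which yields the claimed $\mathcal{O}(n\log^3 n)$ bound after batching the queries on each tree.

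The main obstacle will be the first step: faithfully extracting from~\cite{CLR20} the precise form of the pseudo-gated tree hierarchy and proving that distances factor through a canonical tree $T(u,v)$ with the additive identity above, in a way that lets us write $e(u)$ as an outer maximum over $\mathcal{T}(u)$ of Theorem~\ref{thm:main-tree} queries. Once the reduction and the amortization of the $|U_{T,u}|$ sums are in place, the $\mathcal{O}(n\log^3 n)$ bound is a direct consequence of Theorem~\ref{thm:main-tree}.
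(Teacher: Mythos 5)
Your proposal misuses Theorem~\ref{thm:main-tree} in a way that both breaks the reduction and ruins the running-time analysis, and it also glosses over a structural feature of~\cite{CLR20} that the paper's proof has to work around.

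The first problem is the role of the set $U$. In the statement $e_{T,\alpha}(U,\beta) = \max_{v\in V(T)}\min_{u\in U}(\alpha(v)+d_T(v,u)+\beta(u))$, the \emph{minimum} ranges over $U$ and the outer maximum ranges over all nodes of $T$, with $\alpha$ absorbing per-node maxima. You instead set $U_{T,u}=\{v_T : T(u,v)=T\}$ to be the gates of \emph{all the other} vertices $v$ and claim the contribution of $T$ to $e(u)$ equals $\delta_T(u)+e_{T,\alpha\equiv 0}(U_{T,u},\beta_{T,u})$. But that quantity computes a $\max_z\min_{x\in U_{T,u}}(d_T(z,x)+\beta_{T,u}(x))$, which bears no relation to $\max_{v}(\delta_T(u)+d_T(u_T,v_T)+\delta_T(v))$; the min and max are on the wrong sides. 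The correct use, as in Step~3 of the paper's proof, is to take $U$ to be the (at most two) imprints of the \emph{fixed} vertex $u$ in the quasigated tree $T=\partial^*F(x)$, set $\beta(u_i)=d_G(u,u_i)$, and encode the maximum over the opposing vertices $v$ inside the node weights $\alpha(z) = \max\{d_G(v,z) : z \text{ is the gate of } v\}$. With your reversed assignment, $|U_{T,u}|$ is not bounded by a constant, so even if the formula were right, $\sum_{u}\sum_{T}|U_{T,u}|$ can be $\Theta(n^2)$ and the claimed $\mathcal{O}(n\log^3 n)$ does not follow; the claimed amortization ``$\sum_T\sum_u|U_{T,u}|=\mathcal{O}(n\log n)$'' simply does not follow from $\sum_v|\mathcal{T}(v)|=\mathcal{O}(n\log n)$, since $|U_{T,u}|$ can be as large as $|V(T)|$ for a single $(u,T)$ pair.

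The second problem is that your postulated ``canonical tree $T(u,v)$ with $d_G(u,v)=\delta_T(u)+d_T(u_T,v_T)+\delta_T(v)$'' is too optimistic about the structure in~\cite{CLR20}. The total boundary trees are only \emph{quasigated}, so a vertex can have two imprints and, per Lemma~\ref{lem:classification-dist}, the distance between a vertex in a panel and a vertex in a neighboring cone is a \emph{minimum} over the two imprint routes. This is exactly why the paper needs the $\min_{u\in U}$ in Theorem~\ref{thm:main-tree}. Moreover, the separated-fiber case factors through the centroid (not through a tree) and is handled with $2$-dimensional range trees, and the $2$-neighboring case is handled with a tailored dynamic program over the boundary tree; none of these fit a single additive-through-a-tree template. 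So even if you fix the role of $U$, you still need the centroid/star decomposition, the fiber classification (separated, neighboring, $2$-neighboring), and the recursion on fibers, all of which are absent from your sketch. The high-level intuition --- use~\cite{CLR20}'s pseudo-gated trees together with Theorem~\ref{thm:main-tree} and amortize over a hierarchy --- is in the right spirit, but the actual reduction has to be organized around the star-of-a-centroid decomposition and the three distance formulae of Lemma~\ref{lem:classification-dist}, with Theorem~\ref{thm:main-tree} applied with $|U|\le 2$.
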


Our proof of this above result heavily relies on a recent {\em distance-labeling scheme} for cube-free median graphs~\cite{CLR20}. We first recall some notions and results from this prior work. In what follows, let $G=(V,E)$ be a cube-free median graph, and let $c \in V$ be a centroid. A subgraph $H$ of $G$ is {\em gated} if, for every $v \in V \setminus V(H)$, there exists a $v^* \in V(H)$ s.t., $\forall u \in V(H), \ d_G(u,v) = d_G(u,v^*) + d_G(v^*,v)$. We define the {\em fibers} $F(x) = \{x\} \cup \{ v \in V(G \setminus H) \mid x \ \text{is the gate of} \ v  \ \text{in} \ H\}$. The fibers $F(x), \ x \in V(H)$ partition the vertex-set of $G$, and each induces a gated subgraph~\cite{CLR20}. For any $z \in V$, the {\em star} $St(z)$ of $z$ is the subgraph of $G$ induced by all edges and squares of $G$ incident to $z$. Any such star $St(z)$ is gated and, if furthermore $z=c$, every fiber $F(x), \ x \in St(c)$ contains at most $|V|/2$ vertices~\cite{CLR20}. A fiber $F(x)$ of the star $St(c)$ is a {\em panel} if $x \in N_G(c)$, and a {\em cone} otherwise. We say that two fibers $F(x),F(y)$ are {\em neighboring} if there exists an edge with an end in $F(x)$ and the other end in $F(y)$. If two fibers are neighboring then one must be a panel and the other must be a cone; furthermore, a cone has two neighboring panels~\cite{CLR20}. Two fibers are {\em $2$-neighboring} if they are cones adjacent to the same panel. Finally, two fibers that are neither neighboring nor $2$-neighboring are called {\em separated}. The subset of vertices in $F(x)$ with a neighbour in $F(y)$ is denoted by $\partial_yF(x)$ (with the understanding that $\partial_yF(x) = \emptyset$ when $F(x),F(y)$ are not neighboring). The {\em total boundary} of $F(x)$ is defined as $\partial^*F(x) = \cup_y \partial_y F(x)$. For a set of vertices $A$, an {\em imprint} of a vertex $u$ is a vertex $a \in A$ such that there is no vertex of $A$ (but $a$ itself) on any shortest $au$-path. A subgraph $H$ of $G$ is {\em quasigated} if every vertex of $V(G \setminus H)$ has at most two imprints. It is known that for each fiber $F(x)$ of a star $St(z)$, the total boundary $\partial^*F(x)$ is an isometric quasigated tree~\cite{CLR20}.

\begin{lemma}[\cite{CLR20}]\label{lem:classification-dist}
Let $G=(V,E)$ be a cube-free median graph, let $c \in V$ be a centroid, and let $F(x),F(y)$ be two fibers of the star $St(c)$. The following hold for every $u \in F(x), \ v \in F(y)$.
\begin{itemize}
\item If $F(x)$ and $F(y)$ are separated, then $d_G(u,v) = d_G(u,c) + d_G(c,v)$;
\item If $F(x)$ and $F(y)$ are neighboring, $F(x)$ is a panel and $F(y)$ is a cone, then let $u_1,u_2$ be the two (possibly equal) imprints of $u$ on the total boundary $\partial^*F(x)$, and let $v^*$ be the gate of $v$ in $F(x)$. We have $d_G(u,v) = \min\{ d_G(u,u_1) + d_G(u_1,v^*) + d_G(v^*,v), d_G(u,u_2) + d_G(u_2,v^*) + d_G(v^*,v) \}$;
\item If $F(x)$ and $F(y)$ are $2$-neighboring, then let $F(w)$ be the panel neighboring $F(x)$ and $F(y)$. Let $u^*$ and $v^*$ be the gates of $u$ and $v$ in $F(w)$. Then $d_G(u,v) = d_G(u,u^*) + d_G(u^*,v^*) + d_G(v^*,v)$.
\end{itemize}
\end{lemma}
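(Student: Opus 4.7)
The proof is a three-case analysis, one case per bullet. The common theme is to use the gated structure of the star, of its fibers, and of their total boundaries, to force shortest paths through canonical bottleneck vertices.

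For the separated case, I would start from the fact that, by the defining property of fibers, $x$ is the gate of $u$ in $St(c)$ and $y$ is the gate of $v$ in $St(c)$. Gatedness of the fiber $F(x)$ yields $d_G(u,v) = d_G(u,v^\dagger) + d_G(v^\dagger,v)$, where $v^\dagger$ is the gate of $v$ in $F(x)$. The first key step is to argue that separation forces $v^\dagger = x$: any other choice would witness either an edge from $F(x)$ to $F(y)$ or a square showing a common neighbouring panel, contradicting the definition of separated. The symmetric argument for the gate of $u$ in $F(y)$ then gives $d_G(u,v) = d_G(u,x) + d_G(x,y) + d_G(y,v)$, and a small case analysis on whether $x,y$ are panels or cones in $St(c)$ shows that $c$ must lie on every shortest $xy$-path, hence $d_G(x,y) = d_G(x,c) + d_G(c,y)$.

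For the neighbouring panel--cone case, gatedness of $F(x)$ again yields $d_G(u,v) = d_G(u,v^*) + d_G(v^*,v)$, and since $F(y)$ neighbours the panel $F(x)$, we have $v^* \in \partial_y F(x) \subseteq \partial^* F(x)$. The quasigated tree property of $\partial^* F(x)$ inside $F(x)$ provides at most two imprints $u_1,u_2$ of $u$, and by the defining property of a quasigated subgraph, $d_G(u,w) = \min_i\bigl(d_G(u,u_i) + d_G(u_i,w)\bigr)$ for every $w \in \partial^* F(x)$; specialising to $w = v^*$ gives the claimed formula. For the $2$-neighbouring case, the cube-free hypothesis implies that $F(w)$ is the \emph{unique} panel shared by the two cones $F(x), F(y)$ (two shared panels would complete a $3$-cube at $c$). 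One then shows that every shortest $uv$-path must traverse $F(w)$: routing through $c$ is strictly longer, and routing through any non-shared parent panel of $F(x)$ cannot reach $F(y)$ without still visiting $F(w)$ or $c$. Combined with gatedness of $F(w)$ applied at both $u$ and $v$, this gives $d_G(u,v) = d_G(u,u^*) + d_G(u^*,v^*) + d_G(v^*,v)$.

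The main obstacle is to make the separated case rigorous: separation is only a condition on the fiber adjacency of $St(c)$, whereas the conclusion must rule out shortcuts in all of $G$. The cube-free assumption is the tool that forbids the extra squares around $c$ that would otherwise collapse the neighbouring/$2$-neighbouring/separated trichotomy, and the panel/cone case split for $x$ and $y$ is the bulk of the bookkeeping.
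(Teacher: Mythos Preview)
The paper does not prove this lemma: it is quoted from~\cite{CLR20} (the citation is in the lemma header) and no proof is supplied in the present paper. There is therefore no ``paper's own proof'' to compare your proposal against.

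As a standalone sketch your outline is reasonable and tracks the structure one expects: gatedness of $St(c)$ and of the fibers to produce bottleneck vertices, the quasigated-tree property of $\partial^*F(x)$ for the neighbouring case, and gatedness of the shared panel for the $2$-neighbouring case. Your own caveat is also the right one: in the separated case, the assertion ``separation forces $v^\dagger = x$'' is the substantive step, and it does not follow immediately from the definition of separation (which is a condition on the fiber-adjacency graph, not on individual gates). Making that step rigorous requires the structural analysis of how shortest paths can cross between fibers in a cube-free median graph, which is precisely what~\cite{CLR20} develops; if you want a self-contained argument, that is where the real work lies.
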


In what follows, we sometimes use the fact that cube-free median graphs are sparse~\cite{CLR20}.

\begin{proof}[Proof of Theorem~\ref{thm:cube-free-median}]
We first compute a centroid $c$ for $G$, that takes ${\cal O}(n)$ time~\cite{BCCVR19}. We compute a breadth-first search rooted at $c$, thereby computing $d_G(c,v)$ for every $v \in V$. Note that the eccentricity value $e_G(c)$ is a lower bound for $diam(G)$. Then, we use an algorithmic procedure from~\cite{CLR20} in order to compute in ${\cal O}(n)$ time the fibers $F(x), \ x \in St(c)$.

{\bf Step 1.} We compute the maximum distance between vertices that are in different panels. For that, we enumerate all the panels $F(x), \ x \in N_G(c)$. We set $D(x) = \max_{v \in F(x)} d_G(v,c)$. This can be done in ${\cal O}(|F(x)|)$ time because the fibers were pre-computed and we performed a BFS rooted at $c$. Recall that two panels are always separated. Then, let $x_1 \in N_G(c)$ be s.t. $D(x_1)$ is maximized, and let $x_2 \in N_G(c) \setminus \{x_1\}$ be s.t. $D(x_2)$ is maximized. By Lemma~\ref{lem:classification-dist}, the maximum distance between vertices that are in different panels is equal to $D(x_1) + D(x_2)$. The total running time for computing this distance is in ${\cal O}(n)$.

{\bf Step 2.} We compute the maximum distance between vertices that are in separated fibers. By the previous step, we may restrict ourselves to the case when at least one vertex is in a cone. For that, 
we assign  to all the fibers $F(x)$ pairwise different positive numbers, that we abusively identify with the vertices $x$ of the star $St(c)$. We enumerate the cones $F(y), \ y \in St(c) \setminus N_G[c]$. Let $F(x),F(x')$ be the two panels neighboring $F(y)$. We create a point $\overrightarrow{p}(y) = (x,x')$, to which we associate the value $f(\overrightarrow{p}(y)) = D(y)$. We put these points, and their associated values, in a $2$-dimensional range tree, that takes ${\cal O}(n\log{n})$ time.
\begin{itemize}
\item We enumerate all the panels $F(x), \ x \in N_G(c)$. In order to compute the maximum distance between a vertex of $F(x)$ and a vertex in any separated cone, it suffices to compute a point $\overrightarrow{p}(y) = (p_1,p_2)$ s.t.: $p_1 \neq x, \ p_2 \neq x \ \text{and} \ f(\overrightarrow{p}(y)) \ \text{is maximized for these properties.}$
Indeed, this maximum distance is exactly $D(x) + D(y) = D(x) + f(\overrightarrow{p}(y))$.
Furthermore, since each inequality gives rise to two disjoint intervals to which the corresponding coordinate must belong, a point $\overrightarrow{p}(y)$ as above can be computed using four range queries. It takes ${\cal O}(\log{n})$ time per query, and so ${\cal O}(\log{n})$ time in total.  
\item Again, we enumerate all the cones $F(y), \ y \in St(c) \setminus N_G[c]$. Let $F(x),F(x')$ be the two panels neighboring $F(y)$. In order to compute the maximum distance between a vertex of $F(y)$ and a vertex in any separated cone, it suffices to compute $\overrightarrow{p}(y') = (p_1,p_2)$ s.t.:
$p_1 \notin \{x,x'\}, \ p_2 \notin \{x,x'\} \ \text{and} \ f(\overrightarrow{p}(y')) \ \text{is maximized for these above properties.}$
Indeed, these are necessary and sufficient conditions for the two cones $F(y),F(y')$ not to be $2$-neighboring. In particular, this maximum distance is exactly $D(y) + D(y') = D(y) + f(\overrightarrow{p}(y'))$. Having two forbidden values for a coordinate gives rise to three disjoint intervals to which the latter may belong. Therefore, a point $\overrightarrow{p}(y')$ as above can be computed using nine range queries. 
\end{itemize}
As a result, the total running time of this step is in ${\cal O}(n\log{n})$.

{\bf Step 3.} We compute the maximum distance between vertices that are in neighboring fibers. For that, we enumerate the panels $F(x), \ x \in N_G(c)$. Let $T = \partial^*F(x)$. Note that all total boundaries can be computed in total ${\cal O}(n)$ time~\cite{CLR20}. For each node $z \in V(T)$, let $\alpha(z) = \max\{ d_G(v,z) \mid v \in F(y), \ F(y) \ \text{and} \ F(x) \ \text{are neighboring}, \ z \ \text{is the gate of} \ v \}$ (with the understanding that, if no vertex has $z$ as its gate, then $\alpha(z)$ is a sufficiently large {\em negative} value, say $\alpha(z) = -|V(T)|$). There is an algorithmic procedure in order to compute, for all vertices $v$ in a cone, their gates in the neighboring panels {\em and} their respective distances to the latter, in total ${\cal O}(n)$ time~\cite{CLR20}. Therefore, after an initialization phase in ${\cal O}(|V(T)|) = {\cal O}(|F(x)|)$ time, all the nodes weights can be computed in ${\cal O}(\sum_{y \mid x \sim y}|F(y)|)$ time, where $x \sim y$ denotes that $F(x),F(y)$ are neighboring. Note that a cone is neighboring two panels~\cite{CLR20}. As a result, we scan each cone twice, and the total running-time of this intermediate phase is in ${\cal O}(n)$.
Then, we enumerate the nodes $u \in F(x)$. We want to compute the value $\max\{ d_G(u,v) \mid v \in F(y), \ y \sim x \}$. For that, let $u_1,u_2 \in V(T)$ be the two (possibly equal) imprints of $u$. Set $U = \{u_1,u_2\}$ and $\beta(u_1) = d_G(u,u_1), \ \beta(u_2) = d_G(u,u_2)$. -- Again, note that for all vertices $u$ in a panel, the corresponding $U$ and $\beta$ can be computed in total ${\cal O}(n)$ time~\cite{CLR20}. -- Since $T$ is isometric, then it follows from the distance formulae in Lemma~\ref{lem:classification-dist} that the value to be computed is exactly $e_{T,\alpha}(U,\beta)$. By Theorem~\ref{thm:main-tree}, the latter can be computed in ${\cal O}(\log^2{n})$ time, up to some initial pre-processing of the tree $T$ in ${\cal O}(|V(T)|\log{|V(T)|}) = {\cal O}(|F(x)|\log{n})$ time. Overall, the running-time of this step is in ${\cal O}(n\log^2{n})$. 

{\bf Step 4.} We compute the maximum distance between vertices that are in $2$-neighboring cones. For that, we consider the panels $F(x), \ x \in N_G(c)$ sequentially, computing the maximum distance between the vertices that in different cones neighboring $F(x)$. Again, let $T = \partial^*F(x)$. The following values are pre-computed for every node $z \in V(T)$:
\begin{itemize}
\item $\alpha(z)$: the maximum distance between $z$ and any vertex $v$ in a neighboring cone $F(y)$ of which $z$ is the gate in $F(x)$ (possibly $\alpha(z) = -|V(T)|$, see also Step 3).
\item $r(z)$: the index $y$ of some neighboring cone $F(y)$ s.t. there is a $v \in F(y)$ of which $z$ is the gate in $F(x)$ and the distance $d_G(v,z) = \alpha(z)$ is maximized.
\item $\alpha'(z)$: the maximum distance between $z$ and any vertex $v$ in a neighboring cone $F(y)$, $y \neq r(z)$, of which $z$ is the gate in $F(x)$ (possibly, $\alpha'(z) = -|V(T)|$).
\end{itemize}
As for Step 3, the whole pre-computation phase runs in ${\cal O}(n)$ time.
Our intermediate objective is to compute the following values for every node $z \in V(T)$:
\begin{itemize}
\item $e_1(z) := \max\{d_T(z,z') + \alpha(z') \mid z' \in V(T)\}$;
\item $s(z) := r(z')$ for some $z' \in V(T)$ s.t. $d_T(z,z') + \alpha(z') = e_1(z)$ is maximized;
\item $e_2(z) := \max\{d_T(z,z') + \alpha(z') \mid z' \in V(T), r(z') \neq s(z) \} \cup \{d_T(z,z') + \alpha'(z') \mid z' \in V(T), r(z') = s(z) \}$. 
\end{itemize}
Being given these above three values, we enumerate all neighboring cones $F(y)$. For every $v \in F(y)$, let $z \in V(T)$ be its gate. If $s(z) \neq y$, then by Lemma~\ref{lem:classification-dist}, the maximum distance between $v$ and a vertex in another cone neighboring $F(x)$ is equal to $d_G(v,z) + e_1(z)$. Otherwise, it is  $d_G(v,z) + e_2(z)$. Therefore, in additional ${\cal O}(\sum_{y \mid x \sim y}|F(y)|)$ time (and so, in total ${\cal O}(n)$ time over all the panels $F(x)$), we can compute the maximum distance between the vertices that in different cones neighboring $F(x)$. We are left computing $(e_1(z),s(z),e_2(z))$ for every node $z \in V(T)$. As it turns out, we can do so in ${\cal O}(n)$ time by slightly adapting a standard dynamic programming approach in order to compute all eccentricities in a tree. Specifically, let us root $T$ arbitrarily.
\begin{itemize}
\item We proceed $T$ bottom-up in order to compute the values $(e_1^{\downarrow}(z),s^{\downarrow}(z),e_2^{\downarrow}(z))$, whose definitions are the same as for $(e_1(z),s(z),e_2(z))$ but restricted to the subtree rooted at $z$. If $z$ is a leaf, then $e_1^{\downarrow}(z) = \alpha(z), \ s^{\downarrow}(z) = r(z) \ \text{and} \ e_2^{\downarrow}(z) = \alpha'(z)$. Otherwise, let $z_1,z_2,\ldots,z_d$ be the children of $z$ in $T$. In ${\cal O}(d)$ time, we select a $i$ s.t. $e_1^{\downarrow}(z_i)$ is maximized; we set $e_1^{\downarrow}(z) = \max\{\alpha(z),1+e_1^{\downarrow}(z_i)\}$, with the corresponding value $s^{\downarrow}(z)$ being set accordingly to either $r(z)$ or $s^{\downarrow}(z_i)$. If $r(z) = s^{\downarrow}(z)$ then we replace $\alpha(z)$ by $\alpha'(z)$. In the same way, for any $1 \leq i \leq d$ s.t. $s^{\downarrow}(z_i) = s^{\downarrow}(z)$, we replace $e_1^{\downarrow}(z_i)$ by $e_2^{\downarrow}(z_i)$. In doing so, we can proceed as before in order to compute $e_2^{\downarrow}(z)$. Overall, the whole intermediate phase runs in ${\cal O}(|V(T)|) = {\cal O}(|F(x)|)$ time.

\item We proceed $T$ up-bottom in order to compute the values $(e_1^{\uparrow}(z),s^{\uparrow}(z),e_2^{\uparrow}(z))$, whose definitions are the same as for $(e_1(z),s(z),e_2(z))$ but restricted to the subtree $T'$ where we removed all descendants of $z$ (including $z$ itself). If $z$ is the root, then (since the corresponding $T'$ is empty) $e_1^{\uparrow}(z) = e_2^{\uparrow}(z) = -\infty$ and we may set $s^{\uparrow}(z)$ arbitrarily. Then, let $z$ be already processed, and let $z_1,z_2,\ldots,z_d$ be the children of $z$. W.l.o.g., $e_1^{\downarrow}(z_1) = \max_{1 \leq i \leq d} e_1^{\downarrow}(z_i)$, and similarly $e_1^{\downarrow}(z_2) = \max_{2 \leq i \leq d} e_1^{\downarrow}(z_i)$. The latter can be ensured in ${\cal O}(d)$ time by re-ordering the nodes. Furthermore, $e_1^{\uparrow}(z_1) = \max\{1+\alpha(z),1+e_1^{\uparrow}(z),2+e_1^{\downarrow}(z_2)\}$, while for every $2 \leq i \leq d$ we have $e_1^{\uparrow}(z_i) = \max\{1+\alpha(z),1+e_1^{\uparrow}(z),2+e_1^{\downarrow}(z_1)\}$. The corresponding values $s^{\uparrow}(z_i), \ 1 \leq i \leq d$ are set accordingly. 

There are only four possibilities for the values $s^{\uparrow}(z_i), \ 1 \leq i \leq d$, namely: $r(z), \ s^{\uparrow}(z) \ \text{and} \\ \text{either} \ s^{\downarrow}(z_1) \ \text{or} \ s^{\downarrow}(z_2)$. Let $y$ be any of these four possibilities. We update the node-weights as before in order to compute the values $e_2^{\uparrow}(z_j)$, for any child $z_j$ s.t. $s^{\uparrow}(z_j) = y$; {\it i.e.}, if $r(z) = y$ then we replace $\alpha(z)$ by $\alpha'(z)$, if $s^{\uparrow}(z) = y$ then we replace $e_1^{\uparrow}(z)$ by $e_2^{\uparrow}(z)$, and for any $1 \leq i \leq d$ s.t. $s^{\downarrow}(z_i) = y$ we replace $e_1^{\downarrow}(z_i)$ by $e_2^{\downarrow}(z_i)$. Overall, the processing of all the children of node $z$ requires ${\cal O}(d)$ time, and therefore the total running time for this up-bottom processing is in ${\cal O}(|V(T)|) = {\cal O}(|F(x)|)$.
\end{itemize}

Finally, for any $z \in V(T)$ we set $e_1(z) = \max\{e_1^{\downarrow}(z),e_1^{\uparrow}(z)\}$, with the corresponding value $s(z)$ being set as one of $s^{\downarrow}(z),s^{\uparrow}(z)$ accordingly. If $s^{\uparrow}(z) = s^{\downarrow}(z)$, then $e_2(z) = \max\{e_2^{\downarrow}(z),e_2^{\uparrow}(z)\}$; otherwise, either $e_1(z) = e_1^{\downarrow}(z)$ and so $e_2(z) = \max\{e_1^{\uparrow}(z),e_2^{\downarrow}(z)\}$, or $e_1(z) = e_1^{\uparrow}(z)$ and so $e_2(z) = \max\{e_2^{\uparrow}(z),e_1^{\downarrow}(z)\}$. Overall, this above dynamic program runs in ${\cal O}(|V(T)|) = {\cal O}(|F(x)|)$ time.

{\bf Step 5.} We compute the maximum distance between two vertices that are in a same fiber. Since the fibers $F(x), \ x \in St(c)$ are gated, and so convex, they all induce a cube-free median graph. In particular, we may apply our algorithm recursively on each fiber. Each fiber contains at most $n/2$ vertices, and so there are ${\cal O}(\log{n})$ recursive stages. Since a stage runs in ${\cal O}(n\log^2{n})$ time (the bottleneck being Step 3), the total running time for computing the diameter is in ${\cal O}(n\log^3{n})$. 
\end{proof}

Interestingly, there also exists a distance-labeling scheme with polylogarithmic labels for {\em bounded-degree} median graphs~\cite{CLR20}. We briefly explain how to derive from the latter an ${\cal O}(2^{{\cal O}(2^{{\cal O}(\Delta)})}n\log{n})$-time algorithm in order to compute the diameter of $n$-vertex median graphs of maximum degree $\Delta$. As for Theorem~\ref{thm:main-tree}, let us consider a more general version of the problem with a graph $G=(V,E)$ and a vertex-weight function $\alpha : V \to \mathbb{R}$. We want to compute $\max_u\max_v(d_G(u,v) + \alpha(v))$. For that, we start computing a centroid $c$ w.r.t. the hop distance ({\it i.e.}, ignoring the vertex-weights). It takes ${\cal O}(\Delta  n)$ time~\cite{BCCV20}. Now, the {\em star} $St(z)$ can be defined, for general median graphs, as the subgraph induced by all {\em hypercubes} containing $z$. If the maximum degree is $\Delta$ then, the maximum dimension of a hypercube containing $z$ is at most $\Delta$. As a result, we can compute $St(z)$ in ${\cal O}(\Delta n+2^{{\cal O}(2^{{\cal O}(\Delta)})})$ time, as follows. We compute in ${\cal O}(\Delta n)$ time all the vertices at distance $\leq \Delta$ to $z$ (there are $\leq \Delta(\Delta-1)^{\Delta-1} = 2^{{\cal O}(\Delta\log{\Delta})}$ such vertices). Then, we enumerate all possible subsets of at most $2^\Delta$ vertices including $z$ and we check whether the latter induce hypercubes. It also follows from~\cite{CLR20} that $St(z)$ is gated, and that if $z=c$, then each fiber $F(x), \ x \in St(c)$ contains at most $n/2$ vertices. Recall that fibers are gated subgraphs~\cite{CLR20}. For every fiber $F(x)$, we compute the gate $v^*$ of every $v \notin F(x)$ and the distance $d_G(v,v^*)$. This takes ${\cal O}(\Delta n)$ by using a modified breadth-first search, and so ${\cal O}(|St(c)|\Delta n)$ time in total. We update the weights of the vertices $u \in F(x)$, as follows: $\alpha_x(u) := \max\{\alpha(u)\} \cup \{d_G(u,v) + \alpha(v) \mid v \notin F(x), \ u \ \text{is the gate of} \ v\}$. Then, applying our algorithm recursively to $F(x)$, we compute $\max_{u \in F(x)}\max_{v \in V}(d_G(u,v) + \alpha(v))$.

\medskip
We would like to summarize our results for the diameter problem on median graphs. The diameter of a given $n$-vertex median graph can be computed in ${\cal O}(n^{1+o(1)})$ time, unless: it has dimension at least three, maximum degree $\Omega(\log\log{n})$, and either lattice dimension $\Omega(\log{n^{1-o(1)}})$ or isometric dimension $\Omega(n^{4/5 - o(1)})$.

\bibliography{tree}

\end{document}